\documentclass[onecolumn]{IEEEtran}
\usepackage{amsthm}
\usepackage{times,amssymb,amsmath,amsfonts,float,nicefrac,color,bbm,mathrsfs,caption,float}
\usepackage{enumerate,multirow,caption,tikz,graphicx,enumitem,soul}
\usepackage[mathscr]{eucal}
\usepackage{sidecap}
\usepackage{algpseudocode}
\usepackage{verbatim}
\usepackage{epstopdf}
\usepackage{textcomp}
\usetikzlibrary{shapes,arrows}
\usepackage{mathabx}
\usepackage[noadjust]{cite}
\usepackage{booktabs}
\interdisplaylinepenalty=5000

\usepackage[top=1in,bottom=1in,left=1.2in,right=1.2in]{geometry}
\allowdisplaybreaks

\usepackage[font={it}]{caption}
\usepackage[margin=0.05in]{subcaption}
\usepackage[bottom]{footmisc}
\usepackage[normalem]{ulem}

\usepackage[ruled,vlined,linesnumbered]{algorithm2e}

\usepackage{hyperref} 

\setlength{\parskip}{3pt plus 2pt}
\newcommand\nc\newcommand
\nc\bfa{{\boldsymbol a}}\nc\bfA{{\boldsymbol A}}\nc\cA{{\mathscr A}}
\nc\bfb{{\boldsymbol b}}\nc\bfB{{\boldsymbol B}}\nc\cB{{\mathscr B}}
\nc\bfc{{\boldsymbol c}}\nc\bfC{{\boldsymbol C}}\nc\cC{{\mathscr C}}
\nc\bfd{{\boldsymbol d}}\nc\bfD{{\boldsymbol D}}\nc\cD{{\mathscr D}}
\nc\bfe{{\boldsymbol e}}\nc\bfE{{\boldsymbol E}}\nc\cE{{\mathscr E}}
\nc\bff{{\boldsymbol f}}\nc\bfF{{\boldsymbol F}}\nc\cF{{\mathscr F}}
\nc\bfg{{\boldsymbol g}}\nc\bfG{{\boldsymbol G}}\nc\cG{{\mathscr G}}
\nc\bfh{{\boldsymbol h}}\nc\bfH{{\boldsymbol H}}\nc\cH{{\mathscr H}}
\nc\bfi{{\boldsymbol i}}\nc\bfI{{\boldsymbol I}}\nc\cI{{\mathcal I}}
\nc\bfj{{\boldsymbol j}}\nc\bfJ{{\boldsymbol J}}\nc\cJ{{\mathscr J}}
\nc\bfk{{\boldsymbol k}}\nc\bfK{{\boldsymbol K}}\nc\cK{{\mathscr K}}
\nc\bfl{{\boldsymbol l}}\nc\bfL{{\boldsymbol L}}\nc\cL{{\mathscr L}}
\nc\bfm{{\boldsymbol m}}\nc\bfM{{\boldsymbol M}}\nc{\cM}{{\mathscr M}}
\nc\bfn{{\boldsymbol n}}\nc\bfN{{\boldsymbol N}}\nc\cN{{\mathscr N}}
\nc\bfo{{\boldsymbol o}}\nc\bfO{{\boldsymbol O}}\nc\cO{{\mathscr O}}
\nc\bfp{{\boldsymbol p}}\nc\bfP{{\boldsymbol P}}\nc\cP{{\mathscr P}}\nc\eP{{\EuScriptP}}\nc\fP{{\mathfrak P}}
\nc\bfq{{\boldsymbol q}}\nc\bfQ{{\boldsymbol Q}}\nc\cQ{{\mathscr Q}}
\nc\bfr{{\boldsymbol r}}\nc\bfR{{\boldsymbol R}}\nc\cR{{\mathscr R}}
\nc\bfs{{\boldsymbol s}}\nc\bfS{{\boldsymbol S}}\nc\cS{{\mathscr S}}
\nc\bft{{\boldsymbol t}}\nc\bfT{{\boldsymbol T}}\nc\cT{{\mathscr T}}
\nc\bfu{{\boldsymbol u}}\nc\bfU{{\boldsymbol U}}\nc\cU{{\mathscr U}}
\nc\bfv{{\boldsymbol v}}\nc\bfV{{\boldsymbol V}}\nc\cV{{\mathscr V}}
\nc\bfw{{\boldsymbol w}}\nc\bfW{{\boldsymbol W}}\nc\cW{{\mathscr W}}
\nc\bfx{{\boldsymbol x}}\nc\bfX{{\boldsymbol X}}\nc\cX{{\mathscr X}}
\nc\bfy{{\boldsymbol y}}\nc\bfY{{\boldsymbol Y}}\nc\cY{{\mathscr Y}}
\nc\bfz{{\boldsymbol z}}\nc\bfZ{{\boldsymbol Z}}\nc\cZ{{\mathscr Z}}

\newtheorem{theorem}{Theorem}
\newtheorem{lemma}[theorem]{Lemma}
\newtheorem{claim}[theorem]{Claim}

\newtheorem{proposition}[theorem]{Proposition}

\newtheorem{corollary}[theorem]{Corollary}
\newtheorem{definition}{Definition}

\newtheorem{construction}{Construction}[section]

\theoremstyle{remark}

\DeclareMathOperator{\Span}{Span}
\newcommand{\ff}{{\mathbb F}}

\begin{document}
	
		\title{Convolutional codes with a maximum distance profile based on skew polynomials}
	
	\author{\IEEEauthorblockN{Zitan Chen}}
	\maketitle	
	
	{\renewcommand{\thefootnote}{}\footnotetext{
			
			\vspace{-.2in}
			
			\noindent\rule{1.5in}{.4pt}

			{
				
				Z. Chen is with SSE and FNii, The Chinese University of Hong Kong, Shenzhen, China. Email: chenztan@cuhk.edu.cn
				
			}
		}
	}
	\renewcommand{\thefootnote}{\arabic{footnote}}
	\setcounter{footnote}{0}

	\begin{abstract} We construct a family of $(n,k)$ convolutional codes with degree $\delta\in\{k,n-k\}$ that have a maximum distance profile. The field size required for our construction is $\Theta(n^{2\delta})$, which improves upon the known constructions of convolutional codes with a maximum distance profile. Our construction is based on the theory of skew polynomials.
	\end{abstract}

\section{Introduction}
Construction of codes with good distance has been central to the study of error-correcting codes. While there are quite a number of algebraic constructions of block codes with good distance properties, very few algebraic constructions are known for convolutional codes with good distance properties. In fact, there are several important distance measures for convolutional codes, suitable for assessing their error correction capability in various scenarios. In particular, \emph{free distance} and \emph{column distance} \cite{costello1969construction} are two of the most notable distance measures for convolutional codes. 

Free distance is arguably the most well-known notion of distance measure for convolutional codes. An upper bound for the free distance of convolutional codes was presented in \cite{rosenthal1999maximum}, which generalizes the Singleton bound for block codes, and thus convolutional codes with free distance attaining this upper bound are called maximum distance separable (MDS) convolutional codes. The existence of MDS convolutional codes is shown in \cite{rosenthal1999maximum} using methods from algebraic geometry. Relying on the intrinsic connection between quasi-cyclic block codes and convolutional codes \cite{justesen1973new,massey1973polynomial}, the authors of \cite{smarandache2001constructions} presented the first concrete construction of MDS convolutional code under certain restriction of the code parameters. A few other constructions of MDS convolutional codes for different regimes of the code parameters appeared later in \cite{gluesing2006class}, \cite{plaza2013construction}, \cite{julia7constructions}. 

Although both notions of free distance and column distance were born at the same time \cite{costello1969construction}, the notion of column distance appears to be less known. Informally speaking, column distance measures the Hamming distance of a convolutional code truncated at certain time instant. Thus, a convolutional code is associated with a sequence of column distances at different time instants, and such a sequence is termed the \emph{distance profile} of the convolutional code. In view of this, the distance profile of a convolutional code is particularly helpful for evaluating the error correction capability of the code when it is used for streaming over erasure channels \cite{tomas2012decoding}, \cite{mahmood2016convolutional}. A comprehensive study of convolutional codes with the largest possible column distances was presented in \cite{gluesing2006strongly}. Since a truncated convolutional code is a block code, the column distances of the code satisfy the corresponding Singleton bound for block codes. Moreover, the column distances can not exceed the free distance of the code since the latter is derived with no restriction on the time instant. Clearly, not every truncated code of a convolutional code can be an MDS block code. In other words, not every column distance of a convolutional code can attain the corresponding block code Singleton bound. Convolutional codes whose column distances attain the block code Singleton bound for the maximum number of times are said to have a maximum distance profile (MDP), and such codes are often referred to as MDP convolutional codes.

Algebraic constructions of MDP convolutional codes are scarce in the literature. In \cite{gluesing2006strongly}, the authors presented the first explicit construction of MDP convolutional codes over a large finite field of large characteristic. The idea that is instrumental in this construction is to first construct lower triangular Toeplitz supperregular matrices\footnote{Roughly speaking, a lower triangular Toeplitz matrix is called supperregular if its minors that are not trivially zero are nonzero. We refer the readers to \cite{gluesing2006strongly} for a detailed discussion.} and then utilize them as the building blocks for the construction of MDP convolutional codes. The same idea was also explored in \cite{almeida2013new} to construct another class of MDP convolutional codes over a large finite field of arbitrary characteristic. A third general class of explicit MDP convolutional codes was recently presented in \cite{alfarano2020weighted}, which uses generalized Reed-Solomon (RS) codes as the building blocks. These three families of codes are the only known explicit algebraic constructions of MDP convolutional codes hitherto but all of them require a finite field of gigantic size for the codes to be constructed. Specifically, to construct an MDP convolutional code with rate $R=k/n$ and degree\footnote{The degree of a convolutional code determines the implementation complexity of the code and is arguably one of the most fundamental parameters of the code. See Section~\ref{sec:pre-conv} for a formal definition.} $\delta$, a finite field $\ff$ of size 
$
	|\ff| = 2^{\Theta((R^{-1}\delta+n)^2)}
$ suffices for the construction in \cite{gluesing2006strongly}.
As for the construction in \cite{almeida2013new}, the needed size of the finite field is
$
	|\ff| = 2^{\Theta(2^{R^{-1}\delta+n})}
$.
For the most recent construction in \cite{alfarano2020weighted}, it requires 
$
	|\ff| = 2^{\Theta((\frac{\delta^3}{Rn}+\delta Rn)\log n)}
$, which is comparable to that of \cite{gluesing2006strongly}. Constructing explicit MDP convolutional codes over small finite fields has been challenging and it was conjectured in \cite{hutchinson2008superregular} that a finite field of size
$
	|\ff|= 2^{\Theta(R^{-1}\delta+n)}
$ should suffice for constructing of MDP convolutional codes with rate $R=k/n$ and degree $\delta$.

In this paper we present an explicit construction of MDP convolutional codes over a finite field of size $2^{\Theta(\delta\log n)}$, which improves over all the known explicit algebraic constructions. More precisely, we show that for any rate $R\in(0,1)$ such that $R\neq 1/2$, there exists an explicit MDP convolutional code with rate $R$ and degree $\delta\in\{k,n-k\}$ over a finite field of size $\Theta(n^{2\delta})$. The idea behind our construction differs from the existing constructions and is based on the theory of skew polynomials \cite{lam1988vandermonde}, \cite{lieb2020convolutional}, which has seen important applications in other areas of coding theory such as distributed storage \cite{martinez2019universal}, \cite{cai2021construction}.

The paper is organized as follows. Section~\ref{sec:pre} provides a minimum background on convolutional codes and skew polynomials. Section~\ref{sec:1mds} presents our construction of MDP convolutional codes. Finally, we conclude this paper with a few remarks in Section~\ref{sec:con}.

\section{Preliminaries}
\label{sec:pre}
\subsection{Convolutional codes}
\label{sec:pre-conv}

Let us first introduce basic concepts and requisite terminology of convolutional codes. We shall treat convolutional codes as submodules instead of subspaces since there is only marginal difference between this treatment and the classical approach \cite{johannesson2015fundamentals} of taking convolutional codes as vector subspaces. We refer the readers to \cite{lieb2020convolutional} for a recent survey of convolutional codes based on the module theoretic approach.

Let $\ff$ be a finite field and $\ff[D]$ be the ring of polynomials with indeterminate $D$ and coefficients in $\ff$.

\begin{definition}
	An $(n,k)$ convolutional code over $\ff$ is an $\ff[D]$-submodule $\cC\subset \ff[D]^n$ of rank $k$. A $k\times n$ polynomial matrix $G(D) = (g_{ij}(D))$ over $\ff[D]$ is called a generator matrix of the code $\cC$ if 
	\begin{align*}
		\cC= \{u(D)G(D)\mid u(D)\in \ff[D]^k\}.
	\end{align*}
	 An $(n-k)\times n$ polynomial matrix $H(D)=(h_{ij}(D))$ over $\ff[D]$ is called a parity check matrix of the code $\cC$ if 
	\begin{align*}
		\cC = \{v(D)\in\ff[D]^n\mid H(D)v(D)=0\}.
	\end{align*}
\end{definition} 

A $k\times n$ polynomial matrix $G(D)$ is called \emph{basic} if it has a polynomial right inverse. The \emph{constraint length for the $i$th input} or the \emph{$i$th row degree} of the matrix $G(D)$ is defined to be 
\begin{align*}
	\nu_i=\max_{1\leq j\leq n}\{\deg g_{ij}(D)\}, \quad i=1,\ldots,k.
\end{align*}
Furthermore, the \emph{memory} $m$ of the matrix $G(D)$ is defined as 
\begin{align*}
	m=\max_{1\leq i\leq k}\{\nu_i\},
\end{align*}
and the \emph{overall constraint length} of the matrix $G(D)$ is defined to be 
\begin{align*}
	\nu=\sum_{i=1}^{k}\nu_i.
\end{align*}
The matrix $G(D)$ is said to have \emph{generic row degrees} if $\nu_i\in\{m,m-1\}$ for $i=1,\ldots,k$.

A generator matrix $G(D)$ of an $(n,k)$ convolutional code $\cC$ is called \emph{minimal} or \emph{reduced} if its overall constraint length is minimal over all generator matrices of the code $\cC$.
This minimum overall constraint length is called the \emph{degree} $\delta$ of the code $\cC$.\footnote{One can equivalently define the degree of $\cC$ as the highest degree of the $k\times k$ minors of any generator matrix of $\cC$. This is because the overall constraint length of a generator matrix is always at least the highest degree the of the $k\times k$ minors of the generator matrix \cite{mceliece1993general}, and the sets of degrees of $k\times k$ minors of any two generator matrices are the same since any two generator matrices of $\cC$ differs by left multiplication of a unimodular polynomial matrix.} We shall call an $(n,k)$ convolutional code with degree $\delta$ an $(n,k,\delta)$ convolutional code. 

The degree of $(n,k)$ convolutional codes stipulates the smallest number of memory elements needed to realize an $(n,k)$ convolutional code and is closely related to the decoding complexity of the code \cite{johannesson2015fundamentals}. In this regard, given code parameters $n$ and $k$, it is preferable to construct codes with a small degree in practice. However, it is in general not obvious to determine the degree of an $(n,k)$ convolutional code. The following result from \cite{mceliece1993general} provides a criterion for a generator matrix to be minimal, which is helpful in determining the degree of the convolutional code generated by the matrix.
\begin{lemma}\label{le:minimal-matrices}
	Let $G(D)$ be a $k\times n$ matrix over $\ff[D]$ and define the matrix of the highest order coefficients for $G(D)$, denoted by $\bar{G}=(\bar{G}_{ij})$, by
	\begin{align*}
		\bar{G}_{ij}=\mathrm{coeff}_{D^{\nu_i}}g_{ij}(D),
	\end{align*} where $\mathrm{coeff}_{D^{\nu_i}}g_{ij}(D)$ denotes the coefficient of $D^{\nu_i}$ in the polynomial $g_{ij}(D)$. Then $G(D)$ is minimal if and only if $\bar{G}$ has rank $k$.
\end{lemma}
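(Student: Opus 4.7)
The plan is to prove the lemma in two directions, using the so-called \emph{predictable degree property} as the central tool. The forward direction (if $\bar{G}$ has rank $k$, then $G(D)$ is minimal) is the less obvious one; the reverse direction is a concrete row-reduction argument.

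For the reverse implication (contrapositive), suppose $\bar{G}$ has rank strictly less than $k$. Then there exist scalars $a_1,\ldots,a_k\in\ff$, not all zero, such that $\sum_{i=1}^k a_i \bar{G}_{i,\cdot}=0$. Let $i^\star$ be any index with $a_{i^\star}\neq 0$ and $\nu_{i^\star}$ maximal among $\{i:a_i\neq 0\}$. I would replace the $i^\star$-th row $g_{i^\star}(D)$ of $G(D)$ by
\begin{align*}
\tilde{g}_{i^\star}(D) \;=\; g_{i^\star}(D) + \sum_{i\neq i^\star} \frac{a_i}{a_{i^\star}}\, D^{\nu_{i^\star}-\nu_i}\, g_i(D).
\end{align*}
This is an elementary (hence unimodular) row operation, so it produces a new generator matrix of the same code $\cC$. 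The choice of the $a_i$ guarantees that the coefficient of $D^{\nu_{i^\star}}$ in every entry of $\tilde{g}_{i^\star}(D)$ vanishes, so the $i^\star$-th row degree of the new matrix is strictly less than $\nu_{i^\star}$, while all other row degrees are unchanged. Therefore the overall constraint length strictly decreases, and $G(D)$ was not minimal.

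For the forward implication, I would first establish the predictable degree property: if $\bar{G}$ has rank $k$, then for every nonzero $u(D)=(u_1(D),\ldots,u_k(D))\in\ff[D]^k$,
\begin{align*}
\deg\bigl(u(D)G(D)\bigr) \;=\; \max_{1\le i\le k}\bigl(\deg u_i(D) + \nu_i\bigr).
\end{align*}
This follows by looking at the coefficient of $D^{d}$ on the left-hand side, where $d$ is the right-hand side: that coefficient is $\sum_{i\in I}\lambda_i\,\bar{G}_{i,\cdot}$, where $I=\{i:\deg u_i+\nu_i=d\}$ and $\lambda_i$ is the leading coefficient of $u_i(D)$; by linear independence of the rows of $\bar{G}$, this sum is nonzero.

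Now let $G'(D)$ be any other generator matrix of $\cC$ with row degrees $\nu'_1,\ldots,\nu'_k$. Then $G'(D)=U(D)G(D)$ for some unimodular $U(D)\in\ff[D]^{k\times k}$, and applying the predictable degree property row-wise gives
\begin{align*}
\nu'_i \;=\; \max_{1\le j\le k}\bigl(\deg U_{ij}(D) + \nu_j\bigr), \qquad i=1,\ldots,k.
\end{align*}
Because $\det U(D)$ is a nonzero constant in $\ff$, the Leibniz expansion forces some permutation $\sigma\in S_k$ with $U_{i,\sigma(i)}(D)\neq 0$ for all $i$ and $\sum_i \deg U_{i,\sigma(i)}(D)\ge 0$ (otherwise every term in the Leibniz expansion would have strictly negative degree, impossible for a nonzero constant). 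Consequently,
\begin{align*}
\sum_{i=1}^k \nu'_i \;\ge\; \sum_{i=1}^k \bigl(\deg U_{i,\sigma(i)}(D)+\nu_{\sigma(i)}\bigr) \;\ge\; \sum_{i=1}^k \nu_i,
\end{align*}
which proves $G(D)$ is minimal. The step I expect to require the most care is extracting the permutation $\sigma$ from the constancy of $\det U(D)$ and confirming that at least one Leibniz term has nonnegative total degree; the rest is bookkeeping once the predictable degree property is in hand.
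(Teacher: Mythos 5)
The paper does not prove this lemma; it is stated as a cited result from \cite{mceliece1993general}, so there is no in-text argument for you to match. Your proof is correct and is essentially the classical Forney--McEliece argument for the reduced/minimal criterion: one direction by an explicit degree-lowering unimodular row operation when $\bar G$ is rank-deficient, the other via the predictable degree property combined with the fact that any two generator matrices differ by unimodular left multiplication.

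Two small points of hygiene. In the degree-lowering step you should note that the replaced row $\tilde g_{i^\star}(D)$ cannot vanish identically: if it did, the rows of $G(D)$ would be $\ff[D]$-linearly dependent, contradicting that $G(D)$ is a generator matrix (full row rank over $\ff(D)$); this guarantees the new matrix is still a generator matrix with strictly smaller overall constraint length. In the forward direction, the phrasing ``otherwise every term in the Leibniz expansion would have strictly negative degree'' is a little off, since entries of $U(D)\in\ff[D]^{k\times k}$ are polynomials and a nonzero polynomial never has negative degree. The cleaner justification is simply that $\det U(D)\ne 0$ forces at least one Leibniz term $\prod_i U_{i,\sigma(i)}(D)$ to be a nonzero polynomial, and for that $\sigma$ each factor has degree $\ge 0$, giving $\sum_i\deg U_{i,\sigma(i)}(D)\ge 0$ as you need. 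With those touch-ups the argument is complete and matches the standard literature treatment of this fact.
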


Next, let us briefly discuss distance properties of convolutional codes. To begin with, note that for $v(D)\in \ff[D]^n$, we may write $v(D)=\sum_{j\in\mathbb{N}}v_j D^j\in \ff^n[D]$. The Hamming weight of $v(D)$ is then defined as 
\begin{align*}
	\mathrm{wt}(v(D)) = \sum_{j\in\mathbb{N}}\mathrm{wt}(v_j),
\end{align*} where $\mathrm{wt}(v_j)$ denotes the Hamming weight of the length-$n$ vector $v_j$ over $\ff$. Similarly to the distance of a linear block code, the \emph{free distance} of a convolutional code $\cC$ is defined to be 
\begin{align*}
	d=\min\{\mathrm{wt}(v(D))\mid v(D)\in\cC, v(D)\neq 0\}.
\end{align*}

The free distance was shown in \cite{rosenthal1999maximum} to satisfy the following upper bound that generalizes the Singleton bound for block codes.
\begin{theorem}
	Let $\cC$ be an $(n,k,\delta)$ convolutional code. Then
	\begin{align}
		d\leq (n-k)\left( \left\lfloor\frac{\delta}{k}\right\rfloor+1\right)+\delta+1.\label{eq:sb}
	\end{align}
\end{theorem}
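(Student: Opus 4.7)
The plan is to derive the bound by extracting from $\cC$ a block code whose minimum distance lower-bounds the free distance, and then applying the classical Singleton bound for block codes.

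Concretely, I will fix a minimal generator matrix $G(D)$ of $\cC$ with row degrees $\nu_1,\ldots,\nu_k$ so that $\sum_i \nu_i = \delta$, set $\mu := \lfloor \delta/k \rfloor$, and consider the $\ff$-linear subspace
\begin{align*}
V := \{u(D)\in\ff[D]^k : \deg u_i(D) \leq \mu - \nu_i \text{ for every } i=1,\ldots,k\}
\end{align*}
(with the convention that $u_i=0$ whenever $\mu - \nu_i<0$). For any $u\in V$ the product $u(D)G(D)$ has degree at most $\mu$, so concatenating its coefficients in $D^0,\ldots,D^\mu$ identifies it with a vector in $\ff^{(\mu+1)n}$. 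Since $G(D)$ has rank $k$ over $\ff[D]$, the map $u\mapsto u(D)G(D)$ is injective, and the image $\cC'$ is an $\ff$-linear block code of length $(\mu+1)n$ and dimension
\begin{align*}
\dim_{\ff}\cC' \;=\; \sum_{i=1}^{k}\max(0,\,\mu-\nu_i+1)\;\geq\;(\mu+1)k-\delta,
\end{align*}
where the last inequality uses $\max(0,x)\geq x$ termwise.

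Every nonzero codeword of $\cC'$ comes from a nonzero codeword of $\cC$ whose $D$-support lies in $\{0,1,\ldots,\mu\}$, so its Hamming weight as a block codeword equals its Hamming weight as a convolutional codeword, which is at least $d$. Therefore $d\leq d_{\min}(\cC')$, and the Singleton bound for block codes applied to $\cC'$ gives
\begin{align*}
d \;\leq\; (\mu+1)n - \dim_{\ff}\cC' + 1 \;\leq\; (\mu+1)n - \bigl((\mu+1)k-\delta\bigr)+1 \;=\; (n-k)(\mu+1) + \delta + 1,
\end{align*}
which is the desired inequality after substituting $\mu=\lfloor\delta/k\rfloor$.

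The one subtlety worth checking is that $V$ contain a nonzero element so the Singleton bound on $\cC'$ is nonvacuous; this follows from $(\mu+1)k-\delta\geq 1$, itself immediate from the estimate $\delta\leq \mu k + (k-1)$. I do not need the full predictable-degree property of minimal generator matrices, only the trivial bound $\deg(uG)\leq \max_i(\deg u_i + \nu_i)$ together with the injectivity of $u\mapsto uG$; minimality of $G(D)$ enters solely to guarantee $\sum_i\nu_i=\delta$ rather than something larger, which is what pins the final arithmetic to $\delta+1$.
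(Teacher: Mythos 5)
The paper does not prove this theorem; it is Theorem~2.2 of Rosenthal and Smarandache, cited as \cite{rosenthal1999maximum}, and is stated without proof as background. So there is no in-paper argument to compare against, and I will simply assess your proof on its own terms.

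Your proof is correct and is in fact the standard truncation argument for this generalized Singleton bound. The key points all check out: for a minimal generator matrix the row degrees sum to $\delta$; the set $V$ of message polynomials with $\deg u_i\leq\mu-\nu_i$ is an $\ff$-subspace mapped injectively into $\ff^{(\mu+1)n}$ because $\rank G(D)=k$ and $\deg(uG)\leq\max_i(\deg u_i+\nu_i)\leq\mu$; the dimension lower bound $\dim_\ff\cC'\geq(\mu+1)k-\delta\geq 1$ follows from $\mu k\leq\delta<(\mu+1)k$; and since each nonzero element of $\cC'$ has weight at least $d$, the classical Singleton bound on $\cC'$ yields $d\leq(\mu+1)(n-k)+\delta+1$. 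One minor stylistic note: you invoke only the trivial inequality $\deg(uG)\leq\max_i(\deg u_i+\nu_i)$ rather than the predictable degree property of reduced matrices, which is all that is needed; minimality enters, as you say, solely to identify $\sum_i\nu_i$ with $\delta$. This is a clean, self-contained rendering of the argument in \cite{rosenthal1999maximum}.
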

Convolutional codes whose distance attains \eqref{eq:sb} with equality are called MDS convolutional codes.

In addition to the free distance, the \emph{column distance} is another fundamental distance measure for convolutional codes. To define the column distance, let $G(D)=\sum_{i=0}^{m}G_iD^i$ be a generator matrix of memory $m$ for an $(n,k)$ convolutional code $\cC$. Further, let the corresponding semi-infinite matrix $G$ be
\begin{align*}
	G=\begin{pmatrix}
		G_0 & G_1 & \cdots & G_m & & \\
			& G_0 & G_1 & \cdots & G_m & \\
			&	  &	\ddots & \ddots & & \ddots 
	\end{pmatrix}.
\end{align*}
Let $j\geq 0$. Denote the \emph{$j$th truncated generator matrix} by
\begin{align*}
	G_j^c=\begin{pmatrix}
		G_0 & G_1 & \cdots & G_j\\
			& G_0 & \cdots & G_{j-1}\\
			&	& \ddots & \vdots\\
			&	& 		 & G_0
	\end{pmatrix},
\end{align*} where $G_i=0$ for $i>m$. Similarly, let $H(D)$ be a parity check matrix of memory $m'$ for the code $\cC$. The \emph{$j$th truncated parity check matrix} is denoted by
\begin{align*}
	H_j^c=\begin{pmatrix}
	H_0	&	   & 		 & \\
	H_1	& H_0  & 		 & \\
	\vdots	& \vdots & \ddots & \\
	H_j & H_{j-1} & \cdots & H_0
	\end{pmatrix}, 
\end{align*} where $H_i=0$ for $i>m'$.

With the $j$th truncated matrix, the $j$th column distance of a convolutional code can be defined as follows.
\begin{definition}
	Let $G(D)=\sum_{i=0}^{m}G_iD^i$ be a generator matrix of memory $m$ for an $(n,k)$ convolutional code $\cC$ such that $G_0$ has full rank. For $j\geq 0$ the $j$th column distance of the code $\cC$ is given by 
	\begin{align*}
		d_j^c=\min\{\mathrm{wt}\big((u_0,\ldots,u_j)G_j^c\big)\mid u_0\neq 0, u_i\in\ff^k,i=0,\ldots,j\}.
	\end{align*}
\end{definition}
The Singleton bound for block codes implies that for all $j\geq 0$ the column distance satisfies 
\begin{align}
	d_j^c\leq (n-k)(j+1)+1,\label{eq:sb-c}
\end{align} and equality for a given $j$ implies that all the other distances $d_i^c,i\leq j$ also attain their versions of the bound \eqref{eq:sb-c} with equality. The following result, given in \cite{gluesing2006strongly}, characterizes optimal column distances by the determinants of full-size square submatrices of $G_j^c$.

\begin{theorem}\label{thm:cd}
	Let $G(D)$ be a $k\times n$ basic and minimal generator matrix of an $(n,k)$ convolutional code and let $H(D)$ be a $(n-k)\times n$ basic parity check matrix of the code. 
	Then the following are equivalent:
	\begin{enumerate}
		\item $d_j^c=(n-k)(j+1)+1$;
		\item every $k(j+1)\times k(j+1)$ full-size minor of $G_j^c$ formed by the columns with indices $1\leq t_1 < \ldots < t_{k(j+1)}$ where $t_{ks+1}\geq ns+1$ for $s=1,\ldots,j$ is nonzero;
		\item every $(n-k)(j+1)\times (n-k)(j+1)$ full-size minor of $H_j^c$ formed by the columns with indices $1\leq t_1<\ldots <t_{(n-k)(j+1)}$ where $t_{ks}\leq ns$ for $s=1,\ldots,j$ is nonzero.
	\end{enumerate}
\end{theorem}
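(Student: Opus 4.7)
My plan is to establish $(1)\Leftrightarrow(2)$ by direct manipulation of truncated codewords and then derive $(2)\Leftrightarrow(3)$ via block-code duality. Since $H(D)$ is basic, $H_0$ has full row rank, so $H_j^c$ is a parity-check matrix for the $k(j+1)$-dimensional block code generated by $G_j^c$; consequently $\det G_j^c[\,\cdot\,,T]=0$ if and only if $\det H_j^c[\,\cdot\,,T^c]=0$ for every $T\subseteq\{1,\ldots,n(j+1)\}$ with $|T|=k(j+1)$. One then verifies that the index condition $t_{ks+1}\geq ns+1$ on $T$ is equivalent under complementation to the condition on $T^c$ recorded in $(3)$.

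For $(1)\Rightarrow(2)$, I would use a shift reduction. If some minor over $t_1<\cdots<t_{k(j+1)}$ with $t_{ks+1}\geq ns+1$ vanishes, there is a nonzero $u=(u_0,\ldots,u_j)$ for which $uG_j^c$ vanishes on those positions. Let $r=\min\{i:u_i\neq 0\}$. If $r=0$, then $uG_j^c$ has $u_0\neq 0$ and at least $k(j+1)$ zero coordinates, contradicting $d_j^c=(n-k)(j+1)+1$. If $r\geq 1$, blocks $0,\ldots,r-1$ of $uG_j^c$ already vanish, and the constraint $t_{kr+1}\geq nr+1$ forces at least $k(j-r+1)$ of the $t_i$ to exceed $nr$; the tail of $uG_j^c$ across blocks $r,\ldots,j$ equals $vG_{j-r}^c$ with $v=(u_r,\ldots,u_j)$ and $v_0\neq 0$, so it is a codeword of the $(j-r)$-th truncated code vanishing on at least $k(j-r+1)$ coordinates, contradicting $d_{j-r}^c=(n-k)(j-r+1)+1$ (which, as the excerpt notes, is implied by $(1)$).

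For $(2)\Rightarrow(1)$, I plan to induct on $j$. The base $j=0$ is the classical fact that $G_0$ generates an MDS block code iff every $k\times k$ minor of $G_0$ is nonzero. For the inductive step, augmenting any valid $T\subset\{1,\ldots,nj\}$ for $G_{j-1}^c$ by any $k$-subset $T'\subset\{nj+1,\ldots,n(j+1)\}$ yields a valid column set for $G_j^c$; the block-triangular structure of $G_j^c$ gives the factorization $\det G_j^c[\,\cdot\,,T\cup T']=\det G_{j-1}^c[\,\cdot\,,T]\cdot\det G_0[\,\cdot\,,T'-nj]$, so $(2)$ for $G_j^c$ forces both factors to be nonzero for every valid pair $(T,T')$, establishing both that $G_0$ is MDS and that $(2)$ holds for $G_{j-1}^c$. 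Induction then gives $d_s^c=(n-k)(s+1)+1$ for all $s<j$. If a codeword $c=uG_j^c$ with $u_0\neq 0$ had at least $k(j+1)$ zeros, these earlier optimalities would imply $|Z(c)\cap\{1,\ldots,ns\}|\leq ks-1$ for $s=1,\ldots,j$, so any $k(j+1)$-subset of $Z(c)$ automatically satisfies the column constraint and witnesses a vanishing minor, contradicting $(2)$.

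The subtle ingredients are the shift identity in $(1)\Rightarrow(2)$ and the block-triangular augmentation underlying the induction in $(2)\Rightarrow(1)$, both of which exploit the specific block-Toeplitz form of $G_j^c$; granted these, the complementation check for $(2)\Leftrightarrow(3)$ is essentially bookkeeping.
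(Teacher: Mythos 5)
The paper states Theorem~\ref{thm:cd} as a result quoted from \cite{gluesing2006strongly} and does not prove it, so there is no in-paper argument to compare against. Your proof is correct and is essentially the standard one: the shift identity (the tail of $uG_j^c$ when the leading $r$ blocks of $u$ vanish equals $vG_{j-r}^c$ with $v=(u_r,\ldots,u_j)$) together with the block-triangular determinant factorization $\det G_j^c[\cdot,T\cup T']=\pm\det G_{j-1}^c[\cdot,T]\cdot\det G_0[\cdot,T'-nj]$ are precisely the structural facts that drive the argument in \cite{gluesing2006strongly}, and your appeal to the paper's observation that optimality of $d_j^c$ forces optimality of $d_i^c$ for all $i\leq j$ is the right way to close both implications. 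The $(2)\Leftrightarrow(3)$ step via information-set complementation is also sound once one notes that $G_0$ has full rank by the definition of $d_j^c$, and $H_0$ has full rank because $H(D)$ is basic: a basic polynomial matrix has coprime maximal minors and therefore full rank at every field element, in particular at $D=0$, so $G_j^c$ and $H_j^c$ are indeed full-rank generator and parity-check matrices of mutually dual block codes of lengths summing to $n(j+1)$.

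One bookkeeping remark: complementing the constraint on $T$ (namely $|T\cap\{1,\ldots,ns\}|\leq ks$) gives, for the sorted elements $t_1<\cdots<t_{(n-k)(j+1)}$ of $T^c$, the condition $t_{(n-k)s}\leq ns$ for $s=1,\ldots,j$, not $t_{ks}\leq ns$ as printed in item~(3). This appears to be a typographical slip in the paper's restatement of the theorem from \cite{gluesing2006strongly}; carrying out the complementation you deferred as ``bookkeeping'' would have surfaced it.
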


For brevity, we shall call Item~2 of Theorem~\ref{thm:cd} the \emph{MDP property} of $G_j^c$, and Item~3 of Theorem~\ref{thm:cd} the \emph{MDP property} of $H_j^c$.

While for a $k\times n$ matrix to generate an $(n,k)$ MDS block code, every $k\times k$ submatrix should have a nonzero determinant, the MDP property of $G_j^c$ implies that for an convolutional code to have optimal $j$th column distance, the full-size square submatrices that should have nonzero determinants are only those $k(j+1)\times k(j+1)$ submatrices formed by at most $ks$ columns of the first $ns$ columns of $G_j^c$ for $s=1,\ldots,j$. In fact, the determinant of any other full-size submatrix of $G_j^c$ is always zero.

As there are various distance measures for convolutional codes, their error correction capability can be optimal in a number of different ways. We mention here several families of distance-optimal convolutional codes.
Convolutional codes with a generator matrix of memory $m$ and distance attaining the equality in \eqref{eq:sb-c} for $j=m$ are called $m$-MDS convolutional codes.
Relating the bounds \eqref{eq:sb} and \eqref{eq:sb-c}, \cite{hutchinson2005convolutional} and \cite{gluesing2006strongly} introduced two other types of distance-optimal convolutional codes, described below.

\begin{definition}\label{def:sMDS-MDP}
	Let $\cC$ be an $(n,k,\delta)$ convolutional code. Let $M=\lfloor\frac{\delta}{k}\rfloor+\lceil\frac{\delta}{n-k}\rceil$ and $L=\lfloor\frac{\delta}{k}\rfloor+\lfloor\frac{\delta}{n-k}\rfloor$.
	\begin{enumerate}
		\item The code $\cC$ is said to be strongly-MDS if
		\begin{align*}
			d_M^c=(n-k)\left(\left\lfloor\frac{\delta}{k}\right\rfloor+1\right)+\delta+1.
		\end{align*}
		\item The code $\cC$ is said to be MDP if 
		\begin{align*}
			d_L^c=(n-k)(L+1)+1.
		\end{align*}
	\end{enumerate}
\end{definition}
Clearly, strongly-MDS convolutional codes are also MDS by definition. Moreover, the column distance of strongly-MDS convolutional codes attains the Singleton bound \eqref{eq:sb} at the earliest possible time instant $M$. But the column distance at the time instant prior to $M$ may not attain \eqref{eq:sb-c} . On the contrary, the column distances of MDP convolutional codes attain \eqref{eq:sb-c} for the maximum number of times although the free distance of the codes may not attain \eqref{eq:sb}. We refer the readers to \cite{gluesing2006strongly} for a detailed discussion of the difference between these two families of codes. 

Theorem~\ref{thm:cd} suggests that one way of constructing MDP convolutional codes is to design a minimal and basic generator matrix such that the corresponding truncated matrix has the MDP property. Note that Lemma~\ref{le:minimal-matrices} provides a characterization of minimal matrices. As for basic matrices, 
it is observed in \cite{alfarano2020left} that the assumption of being basic in Theorem~\ref{thm:cd} can be lifted and replaced with a weaker condition that the matrix has generic row degrees. This observation is summarized as follows.

\begin{lemma}\label{le:mdp-generic}
	Let $G(D)$ be a $k\times n$ minimal generator matrix with row degree $\nu_i\in\{m,m-1\}$ and let $\delta=\sum_{i=1}^{k}\nu_i$. If $G_L^c$ has the MDP property then $G(D)$ is basic and it generates an $(n,k,\delta)$ MDP convolutional code.
\end{lemma}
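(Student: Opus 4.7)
The plan is to show, in order, that (i) $G_0$ has rank $k$, (ii) $G$ is basic, and (iii) the code $\cC$ generated by $G$ is MDP with degree $\delta$. Items (i) and (iii) are direct consequences of the MDP property of $G_L^c$ once basicness is in hand, while (ii) is where most of the work lies.

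For (i), I consider the $k(L+1)\times k(L+1)$ submatrix of $G_L^c$ obtained by selecting the first $k$ columns of each of the $L+1$ blocks of $n$ columns. This submatrix is block-upper-triangular with each diagonal block equal to the $k\times k$ matrix formed by the first $k$ columns of $G_0$, so its determinant equals the $(L+1)$-th power of that minor. The MDP property of $G_L^c$ forces this determinant to be nonzero, hence $G_0$ has rank $k$.

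For (ii), I argue by contradiction. Assume $G$ is not basic, so the gcd of the $k\times k$ minors of $G$ has positive degree; equivalently, $\cC$ sits strictly inside its $\ff[D]$-saturation $\tilde\cC$, which is a direct summand of $\ff[D]^n$ of degree $\delta'<\delta$. Let $\tilde G$ be a minimal basic generator matrix of $\tilde\cC$ and write $G=N\tilde G$ for a $k\times k$ polynomial matrix $N$ with $\det N$ nonconstant. The sliding-window matrices factor as $G_L^c=N_L^c\,\tilde G_L^c$, where $N_L^c$ is a $k(L+1)\times k(L+1)$ block-upper-triangular matrix with $N_0$ on the diagonal. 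Since $\tilde G_0$ has rank $k$ (as $\tilde G$ is basic) and $G_0=N_0\tilde G_0$ has rank $k$ by (i), we conclude $N_0$ is invertible, so $N_L^c$ is invertible and every $k(L+1)$-size minor of $G_L^c$ equals $(\det N_0)^{L+1}$ times the corresponding minor of $\tilde G_L^c$. Hence $\tilde G_L^c$ inherits the MDP property, and Theorem~\ref{thm:cd} applied to the basic minimal $\tilde G$ yields $\tilde d_L^c=(n-k)(L+1)+1$. On the other hand, the generalized Singleton bound \eqref{eq:sb} for $\tilde\cC$ at degree $\delta'$ yields $\tilde d_L^c\leq \tilde d_\infty\leq (n-k)(\lfloor \delta'/k\rfloor+1)+\delta'+1$, which should be strictly less than $(n-k)(L+1)+1$ once the generic-row-degree hypothesis pins down $L=\lfloor \delta/k\rfloor+\lfloor \delta/(n-k)\rfloor$ relative to $\delta'<\delta$. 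The resulting numerical clash delivers the contradiction.

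With (i) and (ii) in place, (iii) follows by applying Theorem~\ref{thm:cd} to the now-basic minimal $G$: the MDP property of $G_L^c$ is equivalent to $d_L^c=(n-k)(L+1)+1$, so by Definition~\ref{def:sMDS-MDP} the code $\cC$ is an $(n,k,\delta)$ MDP convolutional code. The main obstacle is the numerical comparison at the end of (ii): the inequality $(n-k)(L+1)+1>(n-k)(\lfloor \delta'/k\rfloor+1)+\delta'+1$ can become tight or even fail in marginal parameter regimes where $L$ computed from $\delta$ coincides with the analogous parameter for $\delta'$, and closing the gap there will likely require exploiting the generic-row-degree hypothesis on $G$ to constrain the Smith invariants of $N$ and thereby force $\delta-\delta'$ to be large enough to re-open the inequality.
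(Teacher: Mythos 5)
The paper does not actually prove Lemma~\ref{le:mdp-generic}; it states it as a known fact and cites \cite{alfarano2020left}, so there is no in-paper proof to compare against. Your attempt therefore has to be judged on its own merits.

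Step (i) is correct: the particular submatrix you choose (the first $k$ columns of each of the $L+1$ column blocks) satisfies the index constraints $t_{ks+1}\geq ns+1$, its determinant equals $(\det G_0')^{L+1}$ where $G_0'$ is the leading $k\times k$ block of $G_0$, and the MDP property of $G_L^c$ therefore forces $G_0$ to have rank $k$. The factorization $G_L^c = N_L^c\,\tilde G_L^c$, the observation that $N_0$ is invertible (hence $N_L^c$ invertible), and the transfer of the MDP property from $G_L^c$ to $\tilde G_L^c$ are also all sound. Step (iii) is an immediate application of Theorem~\ref{thm:cd} once (i) and (ii) are established.

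The gap you flag at the end of step (ii) is, however, a genuine hole and not just a technicality. The inequality you need, $(n-k)(L+1)+1 > (n-k)\bigl(\lfloor\delta'/k\rfloor+1\bigr)+\delta'+1$, is equivalent to $(n-k)L > (n-k)\lfloor\delta'/k\rfloor + \delta'$, and this can fail with equality already for $\delta' = \delta-1$. For instance, take $k=3$, $n=5$ (so $n-k=2$), $m=2$, row degrees $(2,2,1)$, hence $\delta=5$ and $L=\lfloor 5/3\rfloor+\lfloor 5/2\rfloor=3$; with $\delta'=4$ one gets $(n-k)(L+1)+1 = 9$ and $(n-k)(\lfloor\delta'/k\rfloor+1)+\delta'+1 = 2\cdot 2 + 5 = 9$, so the two sides coincide and no contradiction arises. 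In fact $L'=\lfloor\delta'/k\rfloor+\lfloor\delta'/(n-k)\rfloor=3=L$ here, so even the cruder comparison $L\leq L'$ gives nothing. A clean general estimate shows the slack is only $(\delta-\delta') - (n-k-1)$, so the Singleton-bound comparison alone forces a contradiction only when $\delta-\delta'\geq n-k$; small drops in degree are not ruled out this way. To close the gap you would need to extract substantially more from the generic-row-degree hypothesis than a bound on $\delta'$ — e.g.\ control of the full Forney index profile of $\tilde\cC$, or a direct argument that a nonconstant $\det N$ together with row degrees confined to $\{m,m-1\}$ is incompatible with the MDP structure of $G_L^c$ — and as written the proof stops short of doing that. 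So the overall strategy is plausible, but the argument is incomplete precisely where you suspected it might be.
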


Finally, we note a duality result of MDP convolutional codes from \cite{gluesing2006strongly}.
\begin{theorem}\label{thm:mdp-dual}
	An $(n,k,\delta)$ convolutional code $\cC$ over $\ff$ with generator matrix $G(D)$ is MDP if and only if its dual code $\cC^\perp$ with parity check matrix $G(D)$ is an $(n,n-k,\delta)$ MDP convolutional code over $\ff$.
\end{theorem}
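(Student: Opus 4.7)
My plan is to reduce Theorem~\ref{thm:mdp-dual} to two applications of Theorem~\ref{thm:cd}, combined with the classical degree-duality for convolutional codes. First I would invoke the well-known result that the dual $\cC^\perp$ of an $(n,k,\delta)$ convolutional code is an $(n,n-k,\delta)$ convolutional code with the same degree $\delta$, and that a basic generator matrix $G(D)$ of $\cC$ is simultaneously a basic parity check matrix of $\cC^\perp$. Lemma~\ref{le:mdp-generic} ensures that in the MDP regime the generator matrix can indeed be taken to be basic, so this identification is legitimate. I would then observe that the integer $L=\lfloor\delta/k\rfloor+\lfloor\delta/(n-k)\rfloor$ appearing in Definition~\ref{def:sMDS-MDP} is symmetric in $k$ and $n-k$, so both $\cC$ and $\cC^\perp$ have their MDP column distance indexed by the same $L$.

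The core of the argument is to apply Theorem~\ref{thm:cd} twice to the single matrix $G_L^c$, but with opposite roles for $G(D)$. On the one hand, applying item~2 to $\cC$ (where $G$ is viewed as a generator matrix) yields that $\cC$ is MDP iff every $k(L+1)\times k(L+1)$ full-size minor of $G_L^c$ whose column indices $t_1<\ldots<t_{k(L+1)}$ satisfy $t_{ks+1}\ge ns+1$ for $s=1,\ldots,L$ is nonzero. On the other hand, applying item~3 to $\cC^\perp$ (where $G$ is viewed as a parity check matrix, and the roles of $k$ and $n-k$ are exchanged throughout the statement of Theorem~\ref{thm:cd}) yields that $\cC^\perp$ is MDP iff every $k(L+1)\times k(L+1)$ full-size minor of $G_L^c$ whose column indices satisfy the corresponding dual condition is nonzero.

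The last step is a short combinatorial verification that the two column-index conditions in fact pick out the same family of submatrices of $G_L^c$: under the $k\leftrightarrow n-k$ exchange, both reduce to the requirement that at most $ks$ of the selected columns lie among the first $ns$ positions, for each $s=1,\ldots,L$. Once this matching is confirmed, the two MDP statements become literally the same condition on the nonvanishing of the same minors of $G_L^c$, and the equivalence claimed by Theorem~\ref{thm:mdp-dual} follows immediately.

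The only place that requires genuine care is this final combinatorial matching. No deeper machinery is needed beyond degree-duality and Theorem~\ref{thm:cd}, but one must carefully track how the parameters $k$ and $n-k$ propagate through items~2 and~3 of Theorem~\ref{thm:cd} when these items are applied to the dual pair $(\cC,\cC^\perp)$, so that the resulting conditions on $G_L^c$ are recognized as identical rather than merely analogous.
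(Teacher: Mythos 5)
The paper does not supply its own proof of Theorem~\ref{thm:mdp-dual}; it cites the result from \cite{gluesing2006strongly}. So the comparison must be with the proposal's internal correctness rather than the paper's argument. The overall strategy of reducing the duality statement to Theorem~\ref{thm:cd} and exploiting the symmetry of $L$ in $k\leftrightarrow n-k$ is sound, and the statement about dual degrees is standard. However, the central step has a genuine gap.

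You claim that items~2 and~3 of Theorem~\ref{thm:cd} can both be ``applied to the single matrix $G_L^c$'' and then read as conditions on the same family of full-size minors. That is not what the theorem says. Item~2 concerns minors of the \emph{upper} block-triangular truncation $G_L^c$; item~3, applied to $\cC^\perp$ with parity check $G(D)$, concerns minors of the \emph{lower} block-triangular truncation $H_L^c$ built from the blocks $G_0,\dots,G_m$ in the opposite pattern. These are two distinct $k(L+1)\times n(L+1)$ matrices. They coincide only after one simultaneously reverses the order of the $L{+}1$ block-rows and the $L{+}1$ block-columns. Consequently the index sets $T\subset\{1,\dots,n(L+1)\}$ appearing in item~2 and the index sets $T'$ appearing in item~3 for $\cC^\perp$ do not overlap as you describe: they must first be matched up through the block-reversal permutation $\rho$, which sends a column in block $b$ to the corresponding column in block $L+2-b$. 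Until you insert this step, the assertion that ``both reduce to the requirement that at most $ks$ of the selected columns lie among the first $ns$ positions'' is simply false --- item~3 as stated is a \emph{lower} bound on $|T'\cap[1,ns]|$, not an upper bound, and without the reversal these are not equivalent families.

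The argument can be repaired. Under $\rho$, a minor of the lower-triangular truncation on columns $T'$ equals, up to sign, a minor of $G_L^c$ on columns $\rho(T')$, and one has $|\rho(T')\cap[1,ns]| = k(L+1)-|T'\cap[1,n(L+1-s)]|$. Substituting $s'=L+1-s$ and using the dual item~3 condition then produces exactly the constraint $|\rho(T')\cap[1,ns']|\le ks'$ required by item~2. This is precisely the ``genuine care'' you flag as needed, but the proposal neither carries it out nor even records that the two matrices are different; as written the final step does not go through.
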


\subsection{Skew polynomials}
Skew polynomials satisfy most basic properties of conventional polynomials but the product of skew polynomials is not commutative. We refer the readers to \cite{ore1933theory} for basic properties of skew polynomials over finite fields and \cite{lam1985general,lam1988vandermonde} for the theory of evaluation and interpolation on skew polynomials. The recent papers \cite{martinez2018skew,gopi2020improved} also have an accessible exposition of the theory of skew polynomials. In the following we shall only briefly introduce basic results of skew polynomials that suit the need of this paper.

Let $\ff_{q^t}$ be a finite field of size $q^t$ where $q$ is a prime power. Let $\sigma\colon \ff_{q^t}\to\ff_{q^t}$ be the Frobenius automorphism. Namely, $\sigma(a)=a^q$ for any $a\in\ff_{q^t}$.

\begin{definition}
	Let $\ff_{q^t}[x;\sigma]$ be the ring of skew polynomials with indeterminate in $x$ and coefficients in $\ff_{q^t}$, where addition in $\ff_{q^t}[x;\sigma]$ is coefficient wise and multiplication in $\ff_{q^t}[x;\sigma]$ is distributive and satisfies that for any $a\in\ff_{q^t}$
	\begin{align*}
		xa=\sigma(a)x.
	\end{align*} 
\end{definition}

As the product of skew polynomials is not commutative, evaluating skew polynomials over elements in a finite field is different from evaluating conventional polynomials. The following result, given in \cite{lam1985general,lam1988vandermonde}, provides a simple approach to evaluate skew polynomials.

\begin{definition}
	For any $a\in\ff_{q^t}$, let
	\begin{enumerate}
		\item $N_0(a)=1$;
		\item $N_{i+1}(a)=\sigma(N_i(a))a$ for $i\in\mathbb{N}$.
	\end{enumerate}
\end{definition}

\begin{lemma}\label{le:eval}
	Let $f(x)=\sum_{i=0}f_ix^i\in\ff_{q^t}[x;\sigma]$. Then the evaluation of $f(x)$ at any $a\in\ff_{q^t}$ is given by
	\begin{align*}
		f(a) = \sum_{i}f_iN_i(a).
	\end{align*}
\end{lemma}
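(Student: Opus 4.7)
The plan is to invoke the standard notion of evaluation of a skew polynomial via right division by $(x-a)$ and then to verify the claimed formula on monomials by induction, with the general case following by linearity.

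First I would recall that the ring $\ff_{q^t}[x;\sigma]$ is right-Euclidean: for every $f(x) \in \ff_{q^t}[x;\sigma]$ and every $a \in \ff_{q^t}$, there exist unique $q(x) \in \ff_{q^t}[x;\sigma]$ and $r \in \ff_{q^t}$ such that $f(x) = q(x)(x-a) + r$. Following the references cited, the evaluation $f(a)$ is defined to be this remainder $r$. Since remainders are additive in $f$, it suffices to establish the formula $x^i(a) = N_i(a)$ for every $i \ge 0$; linearity of evaluation then yields $f(a) = \sum_i f_i N_i(a)$.

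The core computation is an induction on $i$. The base case $i=0$ gives $1 \cdot (x-a) \cdot 0 + 1$, so $x^0(a) = 1 = N_0(a)$, matching the definition. For the inductive step, suppose $x^i = q_i(x)(x-a) + N_i(a)$. Multiplying on the left by $x$ and using the commutation rule $xb = \sigma(b)x$ with $b = N_i(a)$, I obtain
\begin{align*}
x^{i+1} = xq_i(x)(x-a) + x N_i(a) = xq_i(x)(x-a) + \sigma(N_i(a))\,x.
\end{align*}
Rewriting $\sigma(N_i(a))\,x = \sigma(N_i(a))(x-a) + \sigma(N_i(a))\,a$ and collecting the multiples of $(x-a)$ yields
\begin{align*}
x^{i+1} = \bigl(xq_i(x) + \sigma(N_i(a))\bigr)(x-a) + \sigma(N_i(a))\,a,
\end{align*}
so by uniqueness of right division the remainder is $\sigma(N_i(a))\,a = N_{i+1}(a)$, as required.

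The only delicate point is the non-commutative bookkeeping: one must apply the rule $xa = \sigma(a)x$ only when moving $x$ past a coefficient, and be careful that the remainder $N_i(a) \in \ff_{q^t}$ truly is a constant so that left multiplication by $x$ produces a linear term of the stated form. Once this is handled, the induction is entirely mechanical and the lemma follows.
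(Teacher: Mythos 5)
Your proof is correct and reproduces the standard argument from the references the paper cites (Lam and Leroy): define $f(a)$ as the remainder upon right division by $(x-a)$, establish $x^i(a)=N_i(a)$ by induction using the commutation rule $xb=\sigma(b)x$ so the remainder at step $i+1$ is $\sigma(N_i(a))a=N_{i+1}(a)$, and conclude by additivity and left $\ff_{q^t}$-linearity of the remainder map. The paper states this lemma without proof, simply attributing it to those references, so there is no alternative derivation to compare against; your write-up is the expected one and is sound.
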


Different from conventional polynomials, a skew polynomial $f(x)\in\ff_{q^t}[x;\sigma]$ may have more than $\deg f$ roots in $\ff_{q^t}$. Moreover, the roots may belong to distinct \emph{conjugacy classes} in $\ff_{q^t}$ induced by the field automorphism $\sigma$. The notion of conjugacy classes, introduced in \cite{lam1985general,lam1988vandermonde}, is given below.

\begin{definition}
	Let $a,b\in\ff_{q^t}$. Then $b$ is a conjugate of $a$ with respect to the field automorphism $\sigma$ if there exists $\beta\in\ff_{q^t}^*$ such that 
	\begin{align*}
		b=\sigma(\beta)a\beta^{-1}.
	\end{align*}
	We also call $b$ the $\beta$-conjugate of $a$ with respect to $\sigma$ and write $b={}^\beta a$ for brevity.
\end{definition}

The notion of conjugacy above defines an equivalence relation in $\ff_{q^t}$, and thus we can partition $\ff_{q^t}$ into distinct conjugacy classes. 
For any $a\in\ff_{q^t}$, let us denote by $C_{a}^{\sigma}=\{\sigma(\beta)a\beta^{-1}\mid \beta\in\ff_{q^t}^*\}$ the conjugacy class with representative $a\in\ff_{q^t}$.

\begin{proposition}\label{prop:conjugacy}
	Let $\gamma$ be a primitive element of $\ff_{q^t}$. Then $\{C_{0}^{\sigma},C_{\gamma^0}^{\sigma},C_{\gamma^1}^{\sigma},\ldots,C_{\gamma^{q-2}}^{\sigma}\}$ is a partition of $\ff_{q^t}$. Moreover, $|C_{0}^{\sigma}|=1$ and $|C_{\gamma^i}^{\sigma}|=\frac{q^t-1}{q-1}$ for $i=0,\ldots,q-2$.
\end{proposition}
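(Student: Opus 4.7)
The plan is to exploit the fact that $\ff_{q^t}$ is commutative to reduce the $\sigma$-conjugacy relation to a much simpler one. Since $\sigma(\beta)=\beta^q$ and multiplication in $\ff_{q^t}$ commutes, the defining equation $b=\sigma(\beta)a\beta^{-1}$ simplifies to
\begin{align*}
b=\beta^{q}a\beta^{-1}=a\beta^{q-1}.
\end{align*}
Thus $C_a^\sigma=\{a\beta^{q-1}\mid \beta\in\ff_{q^t}^*\}$, i.e.\ a coset of the image of the $(q-1)$-power map $\phi\colon\ff_{q^t}^*\to\ff_{q^t}^*$ scaled by $a$. From here the rest is a standard cyclic-group count.

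First I would treat $a=0$: the class is clearly $\{0\}$, so $|C_0^\sigma|=1$. Next, for $a\neq 0$, the class $C_a^\sigma=a\cdot\image(\phi)$ has size $|\image(\phi)|$. Since $\ff_{q^t}^*$ is cyclic of order $q^t-1$ and $q-1$ divides $q^t-1$, the kernel of $\phi$ has size $\gcd(q-1,q^t-1)=q-1$, giving $|\image(\phi)|=(q^t-1)/(q-1)$. Hence every nonzero conjugacy class has cardinality $(q^t-1)/(q-1)$, establishing the size claim.

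Because the distinct conjugacy classes partition $\ff_{q^t}^*$, the number of nonzero classes equals $(q^t-1)/\bigl((q^t-1)/(q-1)\bigr)=q-1$. It then suffices to verify that the $q-1$ elements $\gamma^0,\gamma^1,\ldots,\gamma^{q-2}$ lie in pairwise distinct classes. I would prove this by contradiction: if $\gamma^i$ and $\gamma^j$ were conjugate for $0\leq i<j\leq q-2$, then $\gamma^{j-i}=\beta^{q-1}$ for some $\beta\in\ff_{q^t}^*$. Writing $\beta=\gamma^k$ and using that $\gamma$ has order $q^t-1$ yields $j-i\equiv k(q-1)\pmod{q^t-1}$, which in particular forces $(q-1)\mid (j-i)$. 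Since $0<j-i\leq q-2<q-1$, this is impossible, so the $q-1$ classes $C_{\gamma^0}^\sigma,\ldots,C_{\gamma^{q-2}}^\sigma$ are distinct; together with $C_0^\sigma$ they exhaust all $q$ conjugacy classes, proving the partition assertion.

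There is no real obstacle here: the only conceptual step is recognizing that commutativity of $\ff_{q^t}$ collapses the twisted conjugation $\sigma(\beta)a\beta^{-1}$ into the ordinary scalar action $a\mapsto a\beta^{q-1}$, after which everything reduces to standard facts about the cyclic group $\ff_{q^t}^*$ and its subgroup of $(q-1)$-th powers.
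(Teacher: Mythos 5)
The paper does not give a proof of this proposition; it is stated as a known fact with the theory of $\sigma$-conjugacy attributed to Lam and Lam--Leroy, so there is no paper argument to compare against. Your proof is correct and self-contained. The reduction $\sigma(\beta)a\beta^{-1}=a\beta^{q-1}$ (valid because $\ff_{q^t}$ is commutative) is exactly the right observation: it turns each nonzero conjugacy class into a coset $a\cdot\image(\phi)$ of the $(q-1)$-power subgroup, and the cardinality and counting claims then follow from the cyclicity of $\ff_{q^t}^*$ and the divisibility $q-1\mid q^t-1$. The final step showing that $\gamma^0,\ldots,\gamma^{q-2}$ represent distinct classes (since $\gamma^{j-i}$ being a $(q-1)$-th power would force $(q-1)\mid(j-i)$) is precisely the standard way to exhibit a complete set of representatives, and together with $C_0^\sigma=\{0\}$ it accounts for all $q^t$ elements. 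This is the standard proof of this fact and there is no gap.
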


The set of roots in $\ff_{q^t}$ of a skew polynomial $f(x)\in\ff_{q^t}[x;\sigma]$ has interesting structures closely related to the conjugacy classes of $\ff_{q^t}$. The result below can be found in \cite{gopi2020improved}, \cite{lam1985general,lam1988vandermonde}.

\begin{theorem}\label{thm:skew-dim}
	Let $f(x)\in\ff_{q^t}[x;\sigma]$ be a nonzero skew polynomial and $\Omega$ be the set of roots of $f(x)$ in $\ff_{q^t}$. Further, let $\bigcup_{i}\Omega_i=\Omega$ be the partition of $\Omega$ into conjugacy classes and let $\cS_{i}=\{\beta \mid {}^\beta a_i\in \Omega_i\}\cup\{0\}$ where $a_i$ is some fixed representative in $\Omega_i$. Then $\cS_{i}$ is a vector space over $F_i$ where $F_i=\ff_{q^t}$ if $\Omega_i=C_0^{\sigma}$ and otherwise $F_i=\ff_q$. Moreover, 
	\begin{align*}
		\sum_{i}\dim_{F_i} \cS_{i} \leq \deg f.
	\end{align*}
\end{theorem}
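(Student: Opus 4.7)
My plan is to prove Theorem~\ref{thm:skew-dim} in two stages: first establishing the vector-space structure of each $\cS_i$, and then bounding the total $F_i$-dimension by $\deg f$.

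For the vector-space structure I would handle the two cases separately. When $\Omega_i = C_0^\sigma = \{0\}$, note that $\sigma(\beta)\cdot 0\cdot \beta^{-1}=0$ for every $\beta\in\ff_{q^t}^*$, so $\cS_0 = \ff_{q^t}$ (one-dimensional over $\ff_{q^t}$) if $0\in\Omega$ and $\cS_0=\{0\}$ otherwise. For a nontrivial class with representative $a_i\neq 0$, I first prove by induction on $j$ that
\begin{align*}
N_j({}^\beta a_i) \;=\; \sigma^j(\beta)\, N_j(a_i)\,\beta^{-1},
\end{align*}
using only the recursive definition of $N_j$ and multiplicativity of $\sigma$. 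Substituting into Lemma~\ref{le:eval} and using commutativity of $\ff_{q^t}$ yields
\begin{align*}
f({}^\beta a_i) \;=\; \beta^{-1}\sum_{j}f_j\, N_j(a_i)\,\sigma^j(\beta),
\end{align*}
so for $\beta\neq 0$ the element ${}^\beta a_i$ is a root of $f$ iff $\phi_{a_i}(\beta):=\sum_j f_j N_j(a_i)\sigma^j(\beta)$ vanishes. The map $\phi_{a_i}$ is a $\sigma$-linearized polynomial, hence additive and $\ff_q$-linear, so $\cS_i = \ker\phi_{a_i}$ is an $\ff_q$-subspace of $\ff_{q^t}$.

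For the dimension bound I would attach to each nonempty $\cS_i$ the \emph{minimal right-annihilator} $g_i(x)\in\ff_{q^t}[x;\sigma]$, i.e., the monic skew polynomial of smallest degree whose root set contains $\{{}^\beta a_i:\beta\in\cS_i\setminus\{0\}\}$, and then verify three facts. First, $\deg g_i = \dim_{F_i}\cS_i$: given an $F_i$-basis $\beta_1,\ldots,\beta_r$ of $\cS_i$, the images ${}^{\beta_j} a_i$ are pairwise distinct (the map $\beta\mapsto{}^\beta a_i$ factors through $\ff_{q^t}^*/\ff_q^*$ by Proposition~\ref{prop:conjugacy}), and an inductive construction that successively right-multiplies by factors of the form $(x-\alpha)$ produces a degree-$r$ monic skew polynomial annihilating all of them, which is minimal. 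Second, $g_i$ right-divides $f$: the Euclidean right-remainder of $f$ by $g_i$ annihilates the same root set but has degree $<\deg g_i$, forcing it to be zero. Third, the $g_i$'s for distinct classes are pairwise right-coprime, so their right-lcm $g$ has degree $\sum_i \deg g_i$ and still right-divides $f$. Combining these three statements gives $\sum_i \dim_{F_i}\cS_i = \sum_i \deg g_i = \deg g \leq \deg f$.

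The main obstacle is the right-coprimality across distinct conjugacy classes used in the last step: a nontrivial common right factor of $g_i$ and $g_{i'}$ with $i\neq i'$ would be a monic skew polynomial whose roots lie in both $\Omega_i$ and $\Omega_{i'}$, and these are disjoint. Ruling this out requires the structural fact (from the Lam-Leroy theory, \cite{lam1988vandermonde}) that the roots of a minimal right-annihilator of a set contained in a single conjugacy class $C_a^\sigma$ remain inside $C_a^\sigma$. Once this is available, pairwise right-coprimality is immediate, and the rest of the argument reduces to the evaluation formula above and the skew-polynomial division algorithm.
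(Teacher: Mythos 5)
The paper does not actually prove Theorem~\ref{thm:skew-dim}; it imports it as a known result, saying ``the result below can be found in \cite{gopi2020improved}, \cite{lam1985general,lam1988vandermonde}.'' So there is no in-paper proof to compare against, and your proposal should be judged on its own.

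Your outline is essentially the standard Lam--Leroy argument and is sound. The linearization identity $N_j({}^\beta a_i)=\sigma^j(\beta)N_j(a_i)\beta^{-1}$, the resulting $\ff_q$-linearity of $\cD_{f,a_i}$, and the reduction to minimal right annihilators, right divisibility via Euclidean division, and right-coprimality across conjugacy classes are all the right ingredients. One step, however, is asserted rather than justified: you claim that the inductively constructed degree-$r$ annihilator of a basis $\{{}^{\beta_1}a_i,\dots,{}^{\beta_r}a_i\}$ ``is minimal,'' i.e.\ that $\deg g_i$ is \emph{exactly} $r=\dim_{F_i}\cS_i$ and cannot be smaller. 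Exhibiting a degree-$r$ annihilator only bounds $\deg g_i$ from above; the lower bound is what the dimension count actually needs. Fortunately it follows directly from your own linearization: if a nonzero $g$ of degree $d$ vanishes on ${}^\beta a_i$ for every $\beta\in\cS_i\setminus\{0\}$, then $\phi_{a_i,g}(\beta)=\sum_j g_j N_j(a_i)\sigma^j(\beta)$ is a nonzero $q$-linearized polynomial (its top coefficient $g_d N_d(a_i)$ is nonzero because $a_i\neq 0$), so its kernel has $\ff_q$-dimension at most $d$; since $\cS_i\subseteq\ker\phi_{a_i,g}$ this forces $d\geq r$, and thus $\deg g_i=r$. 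With that patch, the only remaining external input is the Lam--Leroy fact that the minimal annihilator of a set inside a single conjugacy class has all its roots in that class (giving pairwise right-coprimality), which you flag explicitly and which is a reasonable citation given that the paper itself treats the entire theorem as cited background.

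Also be a little careful with the degenerate class $\Omega_i=C_0^\sigma=\{0\}$: there ${}^\beta 0=0$ for every $\beta$, so the ``pairwise distinct images'' observation is vacuous; one just takes $g_i=x$, which has degree $1=\dim_{\ff_{q^t}}\cS_i$. Your two-case treatment of the vector-space structure handles this correctly, but the uniform phrasing in the dimension-bound paragraph should not be read as applying the distinctness argument to that class.
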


\begin{definition}\label{def:skew-vandermonde}
	Let $k$ be a positive integer and let $\Omega=\{a_1,\ldots,a_n\}\subset \ff_{q^t}$. The $k\times n$ skew Vandermonde matrix with respect to $\Omega$, denoted by $V_k(\Omega)$, is defined to be 
	\begin{align*}
		V_k(\Omega)=\begin{pmatrix}
			N_0(a_1) & N_0(a_2) & \cdots & N_0(a_n)\\
			N_1(a_1) & N_1(a_2) & \cdots & N_1(a_n)\\
			\vdots & \vdots & \ddots & \vdots\\
			N_{k-1}(a_1) & N_{k-1}(a_2) & \cdots & N_{k-1}(a_n)
		\end{pmatrix}.
	\end{align*} 
\end{definition}

The following corollary is a consequence of Theorem~\ref{thm:skew-dim}.

\begin{corollary}\label{coro:skew-vandermonde}
	Let $\Omega$ be an $n$-subset of $\ff_{q^t}$ and $\bigcup_{i}\Omega_i=\Omega$ be the partition of $\Omega$ into conjugacy classes. Let $l_i=|\Omega_i|$ and fix $a_i\in \Omega_i$. Suppose further that $\Omega_i=\{{}^{\beta_{ij}}a_i\mid j=1,\ldots,l_i\}$. Then $V_n(\Omega)$ has full rank if and only if for each $i$ the set $\{\beta_{ij}\mid j=1,\ldots,l_i\}$ is linearly independent over $F_i$ where $F_i=\ff_{q^t}$ if $\Omega_i=C_0^{\sigma}$ and otherwise $F_i=\ff_q$.
\end{corollary}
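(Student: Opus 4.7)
The plan is to translate the full-rank condition on $V_n(\Omega)$ into a statement about the (non)existence of low-degree skew polynomials vanishing on $\Omega$, and then apply Theorem~\ref{thm:skew-dim}.

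By Lemma~\ref{le:eval}, for any $f(x) = \sum_{i=0}^{n-1} f_i x^i \in \ff_{q^t}[x;\sigma]$ and any $a\in\ff_{q^t}$ we have $f(a) = \sum_i f_i N_i(a)$. Hence the row vector $(f_0,\ldots,f_{n-1})$ lies in the left kernel of $V_n(\Omega)$ exactly when $f$ vanishes on every element of $\Omega$. Consequently, $V_n(\Omega)$ has full rank if and only if no nonzero skew polynomial of degree less than $n$ vanishes on all of $\Omega$. It therefore suffices to show that the latter property is equivalent to the claimed linear independence conditions.

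For the forward direction, assume $\{\beta_{ij}\}_{j=1}^{l_i}$ is $F_i$-linearly independent for every $i$, and suppose for contradiction that a nonzero $f$ of degree less than $n$ vanishes on $\Omega$. I would apply Theorem~\ref{thm:skew-dim} to the full root set of $f$ in $\ff_{q^t}$: each of our (pairwise distinct) conjugacy classes $\Omega_i$ sits inside a unique conjugacy class of roots of $f$, contributing an $F_i$-vector space $\cS_i'$ that, after rescaling via the identity ${}^{\beta}({}^{\gamma}a) = {}^{\beta\gamma}a$ to match representatives, contains every $\beta_{ij}$. The assumed linear independence then gives $\dim_{F_i}\cS_i' \geq l_i$, and since the conjugacy classes are distinct, summing yields $\sum_i \dim_{F_i} \cS_i' \geq \sum_i l_i = n > \deg f$, contradicting the bound of Theorem~\ref{thm:skew-dim}.

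For the reverse direction, suppose $\{\beta_{i_0 j}\}_j$ is $F_{i_0}$-linearly dependent for some $i_0$, and set $d_i := \dim_{F_i}\operatorname{span}_{F_i}\{\beta_{ij}\}_j$, so that $\sum_i d_i < n$. I would then invoke the exact-value refinement of Theorem~\ref{thm:skew-dim} from the classical Lam--Leroy skew-polynomial theory: there exists a nonzero skew polynomial of degree exactly $\sum_i d_i$ whose root set contains $\bigcup_i \{{}^\beta a_i : \beta\in\operatorname{span}_{F_i}\{\beta_{ij}\}_j\setminus\{0\}\} \supseteq \Omega$. Concretely, one may take the least common left multiple of the linear factors $(x - {}^{\beta_{ij}}a_i)$. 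The existence of such a polynomial of degree less than $n$ then contradicts the full-rank property of $V_n(\Omega)$. The main obstacle in the whole argument is this exact-value refinement: it upgrades the inequality in Theorem~\ref{thm:skew-dim} to an equality realized by an explicit LCLM, which is standard in the skew-polynomial literature but not included in the preliminaries, and would need either a citation or a short self-contained argument.
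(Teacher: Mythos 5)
Your forward direction is exactly right: via Lemma~\ref{le:eval}, a nontrivial left-kernel vector of $V_n(\Omega)$ corresponds to a nonzero skew polynomial $f$ of degree $<n$ vanishing on all of $\Omega$; applying Theorem~\ref{thm:skew-dim} to the root set of $f$, and using ${}^{\beta}({}^{\gamma}a)={}^{\beta\gamma}a$ to change representatives (right-multiplication by a fixed nonzero element does not affect $F_i$-linear independence), forces $n=\sum_i l_i\le\deg f$, a contradiction. The paper gives no proof of this corollary, so there is nothing to compare against on its side.

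The one real issue is the reverse direction. You observe, correctly and honestly, that the exact-degree LCLM result you invoke is absent from the paper's preliminaries. However, you do not need it: the reverse direction is elementary and follows from Lemma~\ref{le:skew-linearized}. Suppose $\sum_j c_j\beta_{i_0 j}=0$ is a nontrivial $F_{i_0}$-linear relation. Since every $\beta_{i_0 j}\in\ff_{q^t}^*$, a single element cannot be dependent, so $l_{i_0}\ge 2$, hence $\Omega_{i_0}\ne C_0^\sigma$ (that class is a singleton) and $F_{i_0}=\ff_q$. By Lemma~\ref{le:skew-linearized} with $f=x^\ell$, the map $\beta\mapsto N_\ell({}^{\beta}a_{i_0})\beta$ is $\ff_q$-linear for each $\ell=0,\ldots,n-1$. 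Scaling the column of $V_n(\Omega)$ indexed by ${}^{\beta_{i_0 j}}a_{i_0}$ by the nonzero factor $\beta_{i_0 j}$, the scaled columns satisfy $\sum_j c_j\bigl(N_\ell({}^{\beta_{i_0 j}}a_{i_0})\beta_{i_0 j}\bigr)=0$ for every $\ell$, so the original columns are $\ff_{q^t}$-linearly dependent and $V_n(\Omega)$ is singular. This keeps the entire proof within the results the paper already states.
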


Note that evaluation of skew polynomials does not preserve linearity. Nevertheless, the following result provides a way to linearize the evaluation of skew polynomials on any conjugacy classes, which was first shown in \cite{lam1985general}.


\begin{lemma}\label{le:skew-linearized}
	Let $f(x)\in\ff_{q^t}[x;\sigma], a\in\ff_{q^t}$, and $\beta\in\ff_{q^t}^*$. Then
	$\cD_{f,a}(\beta):=f({}^\beta a)\beta$ is an $\ff_q$-linear map. In other words, for any $\lambda_1,\lambda_2\in\ff_q$ and $\beta_1,\beta_2\in\ff_{q^t}$, we have $\cD_{f,a}(\lambda_1\beta_1+\lambda_2\beta_2)=\lambda_1\cD_{f,a}(\beta_1)+\lambda_2\cD_{f,a}(\beta_2)$.
\end{lemma}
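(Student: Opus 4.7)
The plan is to prove the identity
\[
N_i({}^\beta a)\,\beta = \sigma^i(\beta)\, N_i(a)
\]
for every $i\in\mathbb{N}$, and then feed it into the evaluation formula of Lemma~\ref{le:eval} to expose $\cD_{f,a}$ as a sum of $\ff_q$-linear maps in $\beta$.

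First, I would establish the identity by induction on $i$. The base case $i=0$ is immediate since $N_0({}^\beta a)=N_0(a)=1$. For the inductive step, I would unfold the recursion in the definition of $N_{i+1}$ together with ${}^\beta a=\sigma(\beta)a\beta^{-1}$:
\begin{align*}
N_{i+1}({}^\beta a)\,\beta
&= \sigma\!\bigl(N_i({}^\beta a)\bigr)\cdot {}^\beta a\cdot\beta
= \sigma\!\bigl(N_i({}^\beta a)\bigr)\,\sigma(\beta)\,a\\
&= \sigma\!\bigl(N_i({}^\beta a)\,\beta\bigr)\,a
= \sigma\!\bigl(\sigma^i(\beta)\,N_i(a)\bigr)\,a
= \sigma^{i+1}(\beta)\,N_{i+1}(a),
\end{align*}
where the inductive hypothesis is applied in the penultimate equality and the recursion for $N_{i+1}(a)$ in the last. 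This pins down the behaviour of $N_i({}^\beta a)$ as a function of $\beta$: up to the scalar $\beta^{-1}$, it is just $\sigma^i(\beta)N_i(a)$.

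Next, I would write $f(x)=\sum_i f_i x^i$ and apply Lemma~\ref{le:eval}:
\[
\cD_{f,a}(\beta)
= f({}^\beta a)\,\beta
= \sum_{i} f_i\, N_i({}^\beta a)\,\beta
= \sum_{i} f_i\, \sigma^i(\beta)\, N_i(a).
\]
Since $\sigma$ fixes $\ff_q$ pointwise, each iterate $\sigma^i\colon\ff_{q^t}\to\ff_{q^t}$ is an $\ff_q$-linear map, so the mapping $\beta\mapsto f_i\,\sigma^i(\beta)\,N_i(a)$ is $\ff_q$-linear for each $i$. A finite sum of $\ff_q$-linear maps is $\ff_q$-linear, which gives the desired conclusion
$\cD_{f,a}(\lambda_1\beta_1+\lambda_2\beta_2)=\lambda_1\cD_{f,a}(\beta_1)+\lambda_2\cD_{f,a}(\beta_2)$ for all $\lambda_1,\lambda_2\in\ff_q$ and $\beta_1,\beta_2\in\ff_{q^t}$.

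There is no genuine obstacle here; the only subtlety is bookkeeping the right-multiplication by $\beta$ carefully, because the product in $\ff_{q^t}[x;\sigma]$ is non-commutative and one must not commute $\beta$ past $\sigma(\cdot)$ factors during the inductive step. Once that is handled cleanly, the $\ff_q$-linearity falls out simply from the fact that the map $\beta\mapsto\sigma^i(\beta)$ is $\ff_q$-linear.
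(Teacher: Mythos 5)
Your proof is correct. The paper does not actually prove this lemma; it simply cites \cite{lam1985general} and moves on, so there is no in-paper argument to compare against. Your induction establishing $N_i({}^\beta a)\beta=\sigma^i(\beta)N_i(a)$ and the subsequent appeal to the $\ff_q$-linearity of $\sigma^i$ is precisely the standard linearization argument from Lam's work, and every step checks out. One small remark: your closing aside about ``the product in $\ff_{q^t}[x;\sigma]$ being non-commutative'' is slightly off target, since all the manipulations in the inductive step take place inside the commutative field $\ff_{q^t}$; the real bookkeeping point is that $\sigma$ must be applied to the entire product $N_i({}^\beta a)\beta$ before the inductive hypothesis is substituted, which you do handle correctly.
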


Proposition~\ref{prop:conjugacy}, Theorem~\ref{thm:skew-dim}, and Lemma~\ref{le:skew-linearized} together imply the corollary below.

\begin{corollary}\label{coro:skew-ker-dim}
		Let $f(x)\in\ff_{q^t}[x;\sigma]$ be a nonzero skew polynomial and let $a\in\ff_{q^t}^*$. Further, let $\ker \cD_{f,a}:=\{\beta\in\ff_{q^t}^*\mid \cD_{f,a}(\beta)=0\}\cup \{0\}$. If $\gamma$ be a primitive element of $\ff_{q^t}$ then
	\begin{align*}
		\sum_{i=0}^{q-2}\dim_{\ff_q} \ker \cD_{f,\gamma^i} \leq \deg f.
	\end{align*}
\end{corollary}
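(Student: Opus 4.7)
The plan is to recognize that, after unwinding the definition of the linearized operator $\cD_{f,\gamma^i}$, its kernel is precisely the ``coordinate subspace'' $\cS_i$ that appears in Theorem~\ref{thm:skew-dim}, so that the claim is a direct consequence of the dimension bound already established there.

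First, I would unwind the definition. For any $\beta\in\ff_{q^t}^*$, the equation $\cD_{f,\gamma^i}(\beta)=f({}^\beta \gamma^i)\beta=0$ is equivalent to $f({}^\beta \gamma^i)=0$, i.e., to the conjugate ${}^\beta \gamma^i$ being a root of $f$. Hence, writing $\Omega$ for the set of roots of $f$ in $\ff_{q^t}$ and $\Omega_i:=\Omega\cap C_{\gamma^i}^\sigma$, one has
\[
\ker\cD_{f,\gamma^i}=\{\beta\in\ff_{q^t}^* \mid {}^\beta \gamma^i\in \Omega_i\}\cup\{0\},
\]
which is exactly the space $\cS_i$ of Theorem~\ref{thm:skew-dim} under the choice of representative $a_i=\gamma^i$. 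Lemma~\ref{le:skew-linearized} supplies the $\ff_q$-linearity, consistent with the assertion of Theorem~\ref{thm:skew-dim} that $\cS_i$ is an $\ff_q$-space whenever $\Omega_i\neq C_0^\sigma$, which holds here because $\gamma^i\neq 0$.

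Finally, I would invoke Proposition~\ref{prop:conjugacy} to note that $C_{\gamma^0}^\sigma,\ldots,C_{\gamma^{q-2}}^\sigma$ exhaust the nonzero conjugacy classes of $\ff_{q^t}$, so summing over $i=0,\ldots,q-2$ captures every non-$C_0^\sigma$ summand of $\sum_i\dim_{F_i}\cS_i$. Classes $C_{\gamma^i}^\sigma$ disjoint from $\Omega$ contribute $\ker\cD_{f,\gamma^i}=\{0\}$, which is harmless. Dropping the at-most-one term indexed by $C_0^\sigma$ can only shrink the sum, so the bound $\sum_i\dim_{F_i}\cS_i\leq \deg f$ of Theorem~\ref{thm:skew-dim} immediately yields
\[
\sum_{i=0}^{q-2}\dim_{\ff_q}\ker\cD_{f,\gamma^i}\leq \deg f.
\]
No step presents a real obstacle; the only care needed is matching the set-theoretic definitions of $\cS_i$ and $\ker\cD_{f,\gamma^i}$ and verifying that restricting the sum in Theorem~\ref{thm:skew-dim} to the $q-1$ nonzero conjugacy classes does not invalidate the upper bound.
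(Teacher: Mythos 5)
Your proof is correct and follows exactly the route the paper indicates (the paper gives no written proof, only the sentence that Proposition~\ref{prop:conjugacy}, Theorem~\ref{thm:skew-dim}, and Lemma~\ref{le:skew-linearized} together imply the corollary); you have filled in precisely those steps. The one place where you are slightly imprecise is the sentence claiming that $\ker\cD_{f,\gamma^i}$ is \emph{exactly} the space $\cS_i$ of Theorem~\ref{thm:skew-dim} with $a_i=\gamma^i$: as stated, that theorem takes the representative $a_i$ from inside $\Omega_i$, i.e.\ $a_i$ must be a root of $f$, and $\gamma^i$ need not be one. If $a_i={}^{c}\gamma^i$ with $c\neq 0$ is the theorem's representative, a short computation (${}^{\beta}a_i={}^{\beta c}\gamma^i$) shows $\cS_i=c^{-1}\ker\cD_{f,\gamma^i}$, so the two $\ff_q$-subspaces agree up to a nonzero scalar and in particular have the same $\ff_q$-dimension; that equality of dimensions is all your final inequality uses, so the argument goes through. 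Your treatment of the empty classes ($\ker\cD_{f,\gamma^i}=\{0\}$) and of discarding the possible $C_0^\sigma$ term is correct.
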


\section{Construction} 
\label{sec:1mds}

In this section we will construct $(n,k)$ MDP convolutional codes with degree $\delta\in\{k,n-k\}$. In the following, we will first present a construction of unit memory convolutional codes whose generator matrix is constructed based on skew Vandermonde matrices.

\begin{construction}\label{con:1} 
Let $t=2k$ and let $q\geq\max\{3,n\}$ be a prime power. Let $\ff_{q^t}$ be a finite field and $\gamma$ be a primitive element of $\ff_{q^t}$.\footnote{By the assumption that $q\geq\max\{3,n\}$, it follows from Proposition~\ref{prop:conjugacy} that there exists at least three conjugacy classes in $\ff_{q^t}$.}
Further, let $\{\lambda_1,\ldots,\lambda_n\}\subset\ff_q$ be $n$ distinct elements and define $\alpha_1,\ldots,\alpha_n,\beta_1,\ldots,\beta_n\in\ff_{q^t}$ using an arbitrary basis for $\ff_{q^t}$ over $\ff_q$ as 
\begin{align}
	\alpha_i&=
	(1,
	\lambda_i,
	\lambda_i^2,
	\ldots,
	\lambda_i^{k-1},
	0,
	\ldots,
	0)
	,\quad i=1,\ldots,n;\label{eq:alpha}\\
	\beta_i&=
	(1,
	\lambda_i,
	\lambda_i^2,
	\ldots,
	\lambda_i^{t-1}
	)
	,\quad i=1,\ldots,n.\label{eq:beta}
\end{align}

Let $\cC$ be an $(n,k)$ convolutional code over $\ff_{q^t}$ with generator matrix $G(D)=G_0+G_1D$ where $G_0$ and $G_1$ are given by
\begin{align}
	G_0=\begin{pmatrix}
		N_0({}^{\alpha_1} 1)\alpha_1 & N_0({}^{\alpha_2} 1)\alpha_2 & \cdots & N_0({}^{\alpha_n} 1)\alpha_n\\
		N_1({}^{\alpha_1} 1)\alpha_1 & N_1({}^{\alpha_2} 1)\alpha_2 & \cdots & N_1({}^{\alpha_n} 1)\alpha_n\\
		\vdots & \vdots & \ddots & \vdots\\
		N_{k-1}({}^{\alpha_1} 1)\alpha_1 & N_{k-1}({}^{\alpha_2} 1)\alpha_2 & \cdots & N_{k-1}({}^{\alpha_n} 1)\alpha_n
	\end{pmatrix},\label{eq:g0}\\
	G_1=\begin{pmatrix}
		N_0({}^{\beta_1} \gamma)\beta_1 & N_0({}^{\beta_2} \gamma)\beta_2 & \cdots & N_0({}^{\beta_n} \gamma)\beta_n\\
		N_1({}^{\beta_1} \gamma)\beta_1 & N_1({}^{\beta_2} \gamma)\beta_2 & \cdots & N_1({}^{\beta_n} \gamma)\beta_n\\
		\vdots & \vdots & \ddots & \vdots\\
		N_{k-1}({}^{\beta_1} \gamma)\beta_1 & N_{k-1}({}^{\beta_2} \gamma)\beta_2 & \cdots & N_{k-1}({}^{\beta_n} \gamma)\beta_n
	\end{pmatrix}. 
\end{align}
\end{construction}

Before proving the code $\cC$ is a unit memory MDP convolutional code, let us first show that its generator matrix $G(D)$ has certain desirable properties.
\begin{proposition}\label{prop:skew-vandermonde-full-rank}
	Any $k\times k$ submatrix of $G_0$ or $G_1$ defined in Construction~\ref{con:1} has full rank.
\end{proposition}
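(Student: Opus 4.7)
The plan is to factor $G_0$ and $G_1$ as a $k\times n$ skew Vandermonde matrix times an invertible $n\times n$ diagonal matrix, and then reduce the full-rank claim on any $k\times k$ submatrix to the $\ff_q$-linear independence of a set of conjugating elements via Corollary~\ref{coro:skew-vandermonde}. This last statement will follow from the classical Vandermonde determinant because $\lambda_1,\ldots,\lambda_n\in\ff_q$ are distinct.

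First I would unpack the entries of $G_0,G_1$ against Definition~\ref{def:skew-vandermonde} and record the factorization
\begin{align*}
G_0 &= V_k\bigl({}^{\alpha_1}1,\ldots,{}^{\alpha_n}1\bigr)\cdot \mathrm{diag}(\alpha_1,\ldots,\alpha_n),\\
G_1 &= V_k\bigl({}^{\beta_1}\gamma,\ldots,{}^{\beta_n}\gamma\bigr)\cdot \mathrm{diag}(\beta_1,\ldots,\beta_n).
\end{align*}
Since each $\alpha_i$ and each $\beta_i$ has leading coordinate $1$ in the fixed $\ff_q$-basis of $\ff_{q^t}$, all of them are nonzero, so both diagonal factors are invertible. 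Restricting to a column set $I\subset\{1,\ldots,n\}$ with $|I|=k$, any $k\times k$ submatrix of $G_0$ (respectively $G_1$) then has the same rank as the $k\times k$ skew Vandermonde submatrix $V_k(\{{}^{\alpha_i}1:i\in I\})$ (respectively $V_k(\{{}^{\beta_i}\gamma:i\in I\})$).

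Next, I would observe that all of ${}^{\alpha_i}1$ lie in the single conjugacy class $C_1^\sigma$, realised with representative $a=1$ and conjugating elements $\alpha_i$, and all of ${}^{\beta_i}\gamma$ lie in $C_\gamma^\sigma$, realised with representative $a=\gamma$ and conjugating elements $\beta_i$. Neither class equals $C_0^\sigma$, since both $1$ and $\gamma$ are nonzero. Corollary~\ref{coro:skew-vandermonde} then reduces the full rank of the two skew Vandermonde submatrices above to the $\ff_q$-linear independence of $\{\alpha_i:i\in I\}$ and of $\{\beta_i:i\in I\}$, respectively.

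Finally, each $\alpha_i$ is by construction the element of $\ff_{q^t}$ whose $\ff_q$-basis coordinates are $(1,\lambda_i,\lambda_i^2,\ldots,\lambda_i^{k-1},0,\ldots,0)$, and similarly for $\beta_i$. Stacking these coordinates for $i\in I$ yields, in both cases, a $k\times k$ Vandermonde block in the distinct elements $\{\lambda_i:i\in I\}\subset\ff_q$, which is invertible by the classical Vandermonde determinant. Hence the required $\ff_q$-linear independence holds. I do not expect a genuine obstacle; the proof is essentially a reorganisation of data. The point deserving the most care is invoking Corollary~\ref{coro:skew-vandermonde} correctly, in particular checking that $1$ and $\gamma$ lie in nonzero conjugacy classes so that the governing scalar field in the corollary is $\ff_q$ rather than $\ff_{q^t}$.
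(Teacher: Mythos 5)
Your proof is correct and is essentially the paper's own argument: factor $G_0$ (resp.\ $G_1$) as a skew Vandermonde matrix times an invertible diagonal matrix, apply Corollary~\ref{coro:skew-vandermonde} on the nonzero conjugacy classes $C_1^\sigma$ and $C_\gamma^\sigma$, and then reduce to the $\ff_q$-linear independence of the chosen $\alpha_i$ (resp.\ $\beta_i$), which follows from the classical Vandermonde determinant in the distinct $\lambda_i\in\ff_q$. The only difference is that the paper states the $\ff_q$-linear-independence step without spelling out the Vandermonde-block justification, which you usefully make explicit.
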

\begin{proof}
	The construction \eqref{eq:alpha} of $\alpha_i\in\ff_{q^t}$ ensures that any $k$ elements of $\{\alpha_i\mid i=1,\ldots,n\}$ are linearly independent over $\ff_q$. By Corollary~\ref{coro:skew-vandermonde}, any $k\times k$ submatrix of the skew Vandermonde matrix $V_k(\{{}^{\alpha_1}1,\ldots,{}^{\alpha_n}1\})$ has full rank.
	Observe that $G_0$ can be obtained by multiplying the $i$th column of $V_k(\{{}^{\alpha_1}1,\ldots,{}^{\alpha_n}1\})$ the nonzero element $\alpha_i$ for all $i=1,\ldots,n$. It follows that any $k\times k$ submatrix of $G_0$ has full rank. Similarly, one can show that, by the construction \eqref{eq:beta} of $\beta_i\in\ff_{q^t}$, any $k\times k$ submatrix of $G_1$ has full rank.
\end{proof}

\begin{proposition}\label{prop:unit-mem-minimal}
	The $k\times n$ generator matrix $G(D)$ defined in Construction~\ref{con:1} is minimal.
\end{proposition}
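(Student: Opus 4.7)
The plan is to apply Lemma~\ref{le:minimal-matrices}, which reduces minimality of $G(D)$ to a rank condition on the matrix $\bar G$ of highest-order coefficients. Since $G(D)=G_0+G_1D$ has entries of degree at most $1$, the only nontrivial point is to identify the correct row degree $\nu_i$ for each $i$ and read off $\bar G$.

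First I would argue that $\nu_i=1$ for every $i=1,\dots,k$. For this it suffices to show that no row of $G_1$ is identically zero. But by Proposition~\ref{prop:skew-vandermonde-full-rank}, every $k\times k$ submatrix of $G_1$ is nonsingular; in particular, picking any $k$ columns of $G_1$ gives an invertible matrix, so no row of $G_1$ can vanish. Hence for each $i$ there is some $j$ with $\deg g_{ij}(D)=1$, and thus $\nu_i=1$.

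Consequently, the matrix of highest-order coefficients in the sense of Lemma~\ref{le:minimal-matrices} is exactly $\bar G=G_1$. Applying Proposition~\ref{prop:skew-vandermonde-full-rank} again, $G_1$ has a $k\times k$ nonsingular submatrix and therefore $\rank(G_1)=k$. By Lemma~\ref{le:minimal-matrices}, this forces $G(D)$ to be minimal.

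I don't anticipate a real obstacle here: the work has essentially been done in Proposition~\ref{prop:skew-vandermonde-full-rank}, and the proof reduces to unwinding the definition of row degree and invoking the minimality criterion. The only point to be careful about is ensuring that the nonvanishing of every row of $G_1$ (rather than merely $G_1\neq 0$) is what guarantees $\nu_i=1$ simultaneously for all $i$, so that $\bar G$ really equals $G_1$ rather than some matrix with a mixture of rows taken from $G_0$ and $G_1$.
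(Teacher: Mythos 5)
Your proof is correct and follows the same route as the paper: identify $\bar G = G_1$ and invoke Lemma~\ref{le:minimal-matrices} together with Proposition~\ref{prop:skew-vandermonde-full-rank}. The only difference is that you spell out why every row degree equals $1$ (so that $\bar G$ really is $G_1$), a step the paper dismisses as "clear"; this is a reasonable bit of extra care but not a different argument.
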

\begin{proof}
	It is clear that the matrix of the highest order coefficients for $G(D)$ is $\bar{G}=G_1$. Moreover, by Proposition~\ref{prop:skew-vandermonde-full-rank}, $\bar{G}$ has rank $k$. Therefore, by Lemma~\ref{le:minimal-matrices}, $G(D)$ is minimal.
\end{proof}

Now we are ready to prove that $\cC$ is an MDP convolutional code whenever the rate $k/n< 1/2$.

\begin{theorem}\label{thm:m1}
	Let $n>2k$. Then the $(n,k)$ code $\cC$ defined in Construction~\ref{con:1} is an $(n,k,\delta=k)$ unit memory MDP convolutional code over $\ff_{q^t}$. 
\end{theorem}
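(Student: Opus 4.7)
The strategy is to verify the hypotheses of Lemma~\ref{le:mdp-generic}: show that $G(D)$ is minimal with generic row degrees and that $G_L^c$ has the MDP property, where $L = \lfloor \delta/k \rfloor + \lfloor \delta/(n-k) \rfloor$. By Proposition~\ref{prop:skew-vandermonde-full-rank}, the matrix $G_1$ has rank $k$, so every row of $G_1$ is nonzero; hence each row of $G(D) = G_0 + G_1 D$ has degree exactly $1$, giving $\nu_i = 1$ for all $i$ and memory $m = 1$. Combined with Proposition~\ref{prop:unit-mem-minimal}, this yields $\delta = k$, and the hypothesis $n > 2k$ ensures $k/(n-k) < 1$, so $L = 1$. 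It therefore suffices to verify that every $2k \times 2k$ submatrix $M$ of $G_1^c$ whose column indices $t_1 < \cdots < t_{2k}$ satisfy $t_{k+1} \geq n + 1$ is nonsingular.

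Let $I = \{t_i : t_i \leq n\}$ and $J = \{t_i - n : t_i > n\}$, so that $|I| = a \leq k$, $|J| = b = 2k - a \geq k$, and $M = \bigl(\begin{smallmatrix} G_0[:,I] & G_1[:,J] \\ 0 & G_0[:,J] \end{smallmatrix}\bigr)$. Suppose, toward a contradiction, that $(u_0, u_1) \in \ff_{q^t}^{2k}$ is a nonzero left-annihilator of $M$, and set $f_s(x) = \sum_{r=0}^{k-1} u_{s,r} x^r \in \ff_{q^t}[x;\sigma]$ for $s \in \{0, 1\}$. Using Lemma~\ref{le:eval} and the definition of $\cD$, the equation $(u_0, u_1) M = 0$ unpacks into (i) $\cD_{f_0, 1}(\alpha_i) = 0$ for every $i \in I$, and (ii) $\cD_{f_0, \gamma}(\beta_j) + \cD_{f_1, 1}(\alpha_j) = 0$ for every $j \in J$.

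If $f_0 \equiv 0$, then $f_1 \not\equiv 0$ and (ii) forces $\cD_{f_1, 1}$ to vanish on $\{\alpha_j : j \in J\}$; since any $k$ of the $\alpha$'s are $\ff_q$-linearly independent by~\eqref{eq:alpha} and $|J| \geq k$, this yields $\dim_{\ff_q} \ker \cD_{f_1, 1} \geq k$, which contradicts Corollary~\ref{coro:skew-ker-dim}. In the main case $f_0 \not\equiv 0$, condition (i) combined with the $\ff_q$-linear independence of $\{\alpha_i : i \in I\}$ (which follows from~\eqref{eq:alpha} since $a \leq k$) gives $\dim_{\ff_q} \ker \cD_{f_0, 1} \geq a$. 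To produce a complementary $(k-a)$-dimensional family of kernel elements for $\cD_{f_0, \gamma}$, I observe that for any tuple $(c_j)_{j \in J} \in \ff_q^J$ with $\sum_j c_j \alpha_j = 0$, the $\ff_q$-linearity of $\cD_{f_1, 1}$ (Lemma~\ref{le:skew-linearized}) annihilates the second term when (ii) is summed with weights $c_j$, producing $\cD_{f_0, \gamma}\bigl(\sum_j c_j \beta_j\bigr) = 0$. The space of such tuples has $\ff_q$-dimension $b - k = k - a$, and the map $(c_j) \mapsto \sum_j c_j \beta_j$ is injective on $\ff_q^J$ because $\{\beta_j : j \in J\}$ is $\ff_q$-linearly independent by~\eqref{eq:beta}, so $\dim_{\ff_q} \ker \cD_{f_0, \gamma} \geq k - a$. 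Corollary~\ref{coro:skew-ker-dim} then delivers $a + (k-a) \leq \deg f_0 \leq k - 1$, i.e.\ $k \leq k - 1$, the desired contradiction.

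The heart of the argument is the injection of a $(k-a)$-dimensional subspace into $\ker \cD_{f_0, \gamma}$ via the $\ff_q$-coefficient tuples that kill the $\alpha_j$-side of (ii) while staying injective on the $\beta_j$-side; this is where the coupling of~\eqref{eq:alpha} and~\eqref{eq:beta} through the common $\lambda_i$ is essential, and it is the only nontrivial step. Everything else is repeated application of Corollary~\ref{coro:skew-ker-dim} together with the $\ff_q$-linearity of $\cD$. With $\det M \neq 0$ established in this way, Lemma~\ref{le:mdp-generic} promotes the MDP property of $G_1^c$ to the full statement that $G(D)$ is basic and generates an $(n, k, k)$ MDP convolutional code.
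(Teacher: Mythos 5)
Your proof is correct and follows the same overall strategy as the paper: reduce via Lemma~\ref{le:mdp-generic} to showing that $G_1^c$ has the MDP property, rewrite a putative left-annihilator $(u_0,u_1)$ of a relevant $2k\times 2k$ minor as the pair of skew-polynomial vanishing conditions (i) and (ii), and drive a contradiction out of Corollary~\ref{coro:skew-ker-dim}. The one place where you diverge from the paper is in how the kernel of $\cD_{f_0,\gamma}$ is bounded from below: the paper (Claim~\ref{cl:diff-dim-1}) proves two span-dimension inequalities for the right-hand column block and extracts a contradiction from \eqref{eq:b}, whereas you directly manufacture a $(b-k)$-dimensional subspace of $\ker\cD_{f_0,\gamma}$ by choosing $\ff_q$-tuples $(c_j)$ annihilating $\sum_j c_j\alpha_j$, using (ii) and the $\ff_q$-linearity of $\cD_{f_1,1}$ to cancel the second term, and the full independence of $\{\beta_j\}_{j\in J}$ to keep the images distinct. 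This is a genuine streamlining: it fuses the paper's two bounds \eqref{eq:low}--\eqref{eq:high} and the subsequent contradiction into a single application of Corollary~\ref{coro:skew-ker-dim}, and it also absorbs the paper's separate $|A|=k$ case automatically (there the $a\geq k$ contribution already overruns $\deg f_0\leq k-1$). The price is nothing; both arguments ultimately rest on the same rank-nullity count over the $k$-dimensional $\alpha$-space versus the $2k$-dimensional $\beta$-space, which is exactly where the coupling of \eqref{eq:alpha} and \eqref{eq:beta} through the common $\lambda_i$ is being exploited.
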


\begin{proof}
	It is clear that the memory of the generator matrix $G(D)$ is $m=1$ and we have
	\begin{align*}
		G_1^c = \begin{pmatrix}
			G_0 & G_1\\
			& G_0
		\end{pmatrix}.
	\end{align*}

By Proposition~\ref{prop:unit-mem-minimal}, $G(D)$ is minimal, and thus the degree of $\cC$ is $\delta=k$. Since $n>2k$, we have $L=\lfloor\frac{\delta}{k}\rfloor+\lfloor\frac{\delta}{n-k}\rfloor=1$. Note that $G(D)$ has generic row degrees. Therefore, by Lemma~\ref{le:mdp-generic}, it suffices to show that $G_1^c$ has the MDP property. Namely, if $G_1^c$ satisfies Item~2 of Theorem~\ref{thm:cd}, then $\cC$ is an MDP convolutional code with degree $\delta=k$.


Let $A,B\subset\{1,\ldots,n\}$ be such that $|A|\leq k$ and $|B|=2k-|A|$. Let $P$ be the $2k\times 2k$ matrix formed by columns of $G_1^c$ with indices in the set $A\cup (n+B)$ where $n+B$ means adding every element of $B$ by the integer $n$. Further, let $f = (f_0, f_1, \ldots, f_{k-1})\in\ff_{q^t}^k $ and $h = (h_0, h_1, \ldots, h_{k-1})\in \ff_{q^t}^{k}$. We would like to show that $(f,h)P=0$ if and only if $(f,h)=0$. Let $f(x)=\sum_{i=0}^{k-1}f_ix^i\in\ff_{q^t}[x,\sigma],h(x)=\sum_{i=0}^{k-1}h_ix^i\in\ff_{q^t}[x,\sigma]$ be skew polynomials. Then, by Lemma~\ref{le:eval} and Lemma~\ref{le:skew-linearized}, it is equivalent to showing that 
\begin{align}
	\cD_{f,1}(\alpha_u) = 0,\quad u\in A;\label{eq:a}\\
	\cD_{f,\gamma}(\beta_v) + \cD_{h,1}(\alpha_v) = 0,\quad v\in B.\label{eq:b}
\end{align}
if and only if $(f,h)=0$.
For clarity, let us write the matrix $P$ more explicitly as
\begin{align*}
	P=\begin{pmatrix}
		G_{0,A} & G_{1,B}\\
			    & G_{0,B}
	\end{pmatrix},
\end{align*} where $G_{0,A}$ is the submatrix of $G_0$ with column indices in $A$, and $G_{1,B}$ and $G_{0,B}$ are defined similarly. 

Consider the case $|A|=k$. In this case, we have $|B|=k$. By Proposition~\ref{prop:skew-vandermonde-full-rank}, $G_{0,A},G_{1,B},G_{0,B}$ are $k\times k$ matrices of full rank. It follows that $(f,h)P=0$ if and only if $(f,h)=0$.

Consider the case $|A|<k$. In this case, we have $|B|>k$. 
Suppose there exists $(f,h)\neq 0$ such that $(f,h)P=0$.
Suppose further that $h\neq 0$ but $f=0$. By Proposition~\ref{prop:skew-vandermonde-full-rank}, the $k\times |B|$ matrix $G_{0,B}$ has full rank. Therefore, $(0,h)P=0$ implies $h=0$, which contradicts the assumption that $h\neq 0$. Now suppose that $f\neq 0$ but $h=0$. Similarly, by Proposition~\ref{prop:skew-vandermonde-full-rank}, the $k\times |B|$ matrix $G_{1,B}$ has full rank. Therefore, $(f,0)P=0$ implies $f=0$, which contradicts the assumption that $f\neq 0$, and we are left with the only possibility that $f\neq 0,h\neq 0$ for $(f,h)$ to be nonzero.

The conclusion of this theorem for the case $|A|<k$ will follow from the following claim.
\begin{claim}\label{cl:diff-dim-1}
	If $|A|<k$, $f\neq 0$, and $(f,h)P=0$ then
	\begin{align}
		\dim_{\ff_q}\Span_{\ff_q}\{\cD_{h,1}(\alpha_v)\mid v\in B\} &\leq k,\label{eq:low}\\
		\dim_{\ff_q}\Span_{\ff_q}\{\cD_{f,\gamma}(\beta_v)\mid v\in B\} &\geq k+1.\label{eq:high}
	\end{align}
\end{claim}
Before proving the claim, let us show that this claim indeed implies the theorem. Suppose there exists $f\neq 0,h\neq 0$ such that $(f,h)P=0$. Then Claim~\ref{cl:diff-dim-1} holds. By \eqref{eq:high} there exists $S\subset B$ with $|S|\geq k+1$ such that $\{\cD_{f,\gamma}(\beta_v)\mid v\in S\}$ is linearly independent over $\ff_q$. Moreover, since $(f,h)P=0$ then by \eqref{eq:b} we have $\cD_{f,\gamma}(\beta_v)+\cD_{h,1}(\alpha_v)=0$ for all $v\in S$. It follows that $\{\cD_{h,1}(\alpha_v)\mid v\in S\}$ is linearly independent over $\ff_q$, which contradicts \eqref{eq:low} since $|S|\geq k+1$. Therefore, if $(f,h)P=0$ then $(f,h)=0$, and the theorem follows. It remains to establish Claim~\ref{cl:diff-dim-1}.

\emph{Proof of Claim~\ref{cl:diff-dim-1}}: 
Let us first establish \eqref{eq:low}. Note that we have $|B|>k$. By construction \eqref{eq:alpha} of $\alpha_i$, any $k$ elements of $\{\alpha_v \mid v\in B\}$ form a maximally linearly independent subset of $B$. In other words, there exists $E\subset B$ with $|E|=k$ such that for any $v^*\in B\setminus E$ the element $\alpha_{v^*}$ can be written as an $\ff_q$-linear combination of $\{\alpha_v \mid v\in E\}$.
By Lemma~\ref{le:skew-linearized}, $\cD_{h,1}$ is an $\ff_q$-linear map. Therefore, for any $v^*\in B\setminus E$ the element $\cD_{h,1}(\alpha_{v^*})$ can be written as an $\ff_q$ linear combination of $\{\cD_{h,1}(\alpha_v)\mid v\in E\}$.
Thus, $\dim_{\ff_q}\Span_{\ff_q}\{\cD_{h,1}(\alpha_v)\mid v\in B\} \leq k$.

Next, let us show \eqref{eq:high}. Note that $f(x)$ is a nonzero skew polynomial with $\deg f\leq k-1$. Therefore, by Corollary~\ref{coro:skew-ker-dim},
$\dim_{\ff_q}\ker \cD_{f,1} + \dim_{\ff_q}\ker \cD_{f,\gamma} \leq k-1$. At the same time, by \eqref{eq:a} we have $\cD_{f,1}(\alpha_u)=0$ for all $u\in A$ where $|A|<k$. In addition, by construction \eqref{eq:alpha} of $\alpha_i$, the elements in $\{\alpha_u\mid u\in A\}$ are linearly independent over $\ff_q$. Therefore, we have $\dim_{\ff_q}\ker \cD_{f,1} \geq |A|$. It follows that
\begin{align}
	\dim_{\ff_q}\ker \cD_{f,\gamma} &\leq k-1-\dim_{\ff_q}\ker \cD_{f,1}\nonumber\\
	& \leq k-1-|A|\label{eq:ker-f}
\end{align} 

Suppose $\dim_{\ff_q}\Span_{\ff_q}\{\cD_{f,\gamma}(\beta_v)\mid v\in B\}< k+1$. Then there exists $T\subset B$ with $|T|\leq k$ such that for all $v^*\in B\setminus T$
\begin{align*}
	\cD_{f,\gamma}(\beta_{v^*}) = \sum_{v\in T}\eta_v \cD_{f,\gamma}(\beta_v),
\end{align*} where $\eta_v\in \ff_q$. By Lemma~\ref{le:skew-linearized}, $\cD_{f,\gamma}$ is an $\ff_q$-linear map. Thus, for all $v^*\in B\setminus T$ we have
\begin{align*}
	\cD_{f,\gamma}(\beta_{v^*}) = \cD_{f,\gamma}\Big(\sum_{v\in T}\eta_v\beta_v\Big).
\end{align*}
Let $\bar{\beta}_{v^*}=\beta_{v^*}-\sum_{v\in T}\eta_v\beta_v$. Then $\bar{\beta}_{v^*}\in\ker \cD_{f,\gamma}$. We claim that $\bar{\beta}_{v^*}\notin \Span_{\ff_q}\{\beta_v\mid v\in T\}$. Indeed, if $\bar{\beta}_{v^*}\in\Span_{\ff_q}\{\beta_v\mid v\in T\}$, then $\beta_{v^*}$ can be written as a linear combination of $\{\beta_v\mid v\in T\}$ over $\ff_q$ where $|T|\leq k$. However, by construction \eqref{eq:beta} of $\beta_i$, any $t=2k$ elements of $\{\beta_i \mid i=1,\ldots,n\}$ are linearly independent over $\ff_q$, which is a contradiction. Furthermore, by construction \eqref{eq:beta} of $\beta_i$, the set $\{\bar{\beta}_{v^*}\mid v^*\in B\setminus T\}$ is linearly independent over $\ff_q$. Since $\bar{\beta}_{v^*}\in\ker\cD_{f,\gamma}$ for all $v^*\in B\setminus T$, we have $\dim_{\ff_q}\ker \cD_{f,\gamma}\geq |B\setminus T|\geq |B|-k$. Therefore, from \eqref{eq:ker-f} we obtain $k-1-|A|\geq |B|-k$, which implies $2k-1\geq |A|+|B|=2k$ and leads to a contradiction. This completes the proof of Claim~\ref{cl:diff-dim-1} as well as the proof for the theorem.
\end{proof}

The above result gives an explicit construction of MDP convolutional code with rate $k/n<1/2$ and degree $\delta=k$. It can be easily extended to the case of codes with higher rate and the same degree by applying Theorem~\ref{thm:mdp-dual}.

\begin{corollary}\label{coro:m1-dual}
	Let $n>2k$ and $\cC^\perp$ be the dual code of the $(n,k)$ code $\cC$ defined in Construction~\ref{con:1}.
	Then $\cC^\perp$ is an $(n,n-k,\delta=k)$ MDP convolutional code over $\ff_{q^t}$.
\end{corollary}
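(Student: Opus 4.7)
The plan is to obtain Corollary~\ref{coro:m1-dual} as an immediate consequence of Theorem~\ref{thm:m1} combined with the duality result Theorem~\ref{thm:mdp-dual}. There is essentially no new technical content to produce here, since all the combinatorial and skew-polynomial work has already been carried out in the proof of Theorem~\ref{thm:m1}.

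First, I would invoke Theorem~\ref{thm:m1}: under the hypothesis $n>2k$, the code $\cC$ from Construction~\ref{con:1} is an $(n,k,\delta=k)$ MDP convolutional code over $\ff_{q^t}$ with generator matrix $G(D)=G_0+G_1D$. Next, I would recall that $\cC^\perp$ is by definition the code of length $n$ whose parity check matrix is the generator matrix $G(D)$ of $\cC$. Applying Theorem~\ref{thm:mdp-dual} to $\cC$ then directly yields that $\cC^\perp$ is an $(n,n-k,\delta=k)$ MDP convolutional code over $\ff_{q^t}$, which is exactly the statement to be proved.

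Because the argument is a one-line appeal to duality, there is no genuine obstacle; the only point worth flagging is the preservation of the degree $\delta=k$ under duality, but this is built into the statement of Theorem~\ref{thm:mdp-dual}, which explicitly asserts that the dual has the same degree. In particular, no further verification of minimality, basicness, or the MDP property of any truncated matrix associated with $\cC^\perp$ is required, since these are already encoded in the dual characterization.
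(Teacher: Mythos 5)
Your proposal is correct and matches the paper's approach exactly: the paper states the corollary as an immediate application of Theorem~\ref{thm:mdp-dual} to the result of Theorem~\ref{thm:m1}, with the degree preservation built into the duality theorem. There is nothing more to add.
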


Combining Theorem~\ref{thm:m1} and Corollary~\ref{coro:m1-dual}, we have the following result.

\begin{corollary}\label{coro:m1-k}
	Let $n\neq 2k$. There exists a family of $(n,k,\delta=k)$ MDP convolutional codes that can be constructed explicitly over a finite field of size $\Theta(n^{2k})$. 
\end{corollary}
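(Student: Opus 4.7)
The plan is to conclude the corollary by invoking Theorem~\ref{thm:m1} and Corollary~\ref{coro:m1-dual}, and by picking the base prime power $q$ so that the ambient field has the stated size.

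For the case $n > 2k$, I would apply Theorem~\ref{thm:m1} directly to the pair $(n, k)$. The theorem's hypothesis $n > 2k$ holds, so it produces an $(n, k, \delta = k)$ MDP code over $\ff_{q^{2k}}$ for any prime power $q \geq \max\{3, n\}$. Choosing $q$ to be the smallest prime at least $n$ (which, by Bertrand's postulate, satisfies $q \leq 2n$), I get $q = \Theta(n)$ and hence $|\ff_{q^{2k}}| = q^{2k} = \Theta(n^{2k})$, as required. This handles the case $n > 2k$ completely.

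For the case $n < 2k$, the pair $(n, k)$ falls outside the hypothesis of Theorem~\ref{thm:m1}, and I expect this to be the main obstacle. The only $(n, k)$-MDP code that Construction~\ref{con:1} and its dual (Corollary~\ref{coro:m1-dual}) can yield in this regime is obtained by applying Theorem~\ref{thm:m1} to $(n, k') = (n, n - k)$, which satisfies $n > 2k'$, and then dualizing; this produces an $(n, n - k', k') = (n, k, n - k)$ MDP code over $\ff_{q^{2(n-k)}}$, of size $\Theta(n^{2(n-k)})$. Its degree is $n - k$, \emph{not} $k$, and the field size is $\Theta(n^{2(n-k)})$, \emph{not} $\Theta(n^{2k})$. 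Moreover, since Construction~\ref{con:1} produces only memory-$1$ generator matrices whereas the MDP index for an $(n, k, k)$ code with $n < 2k$ is $L = 1 + \lfloor k/(n-k) \rfloor \geq 2$, the unit-memory matrix of Construction~\ref{con:1} cannot even certify MDP at the required index through Theorem~\ref{thm:cd}, so no tweak of the parameters inside the present construction will recover $\delta = k$ in this range.

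Consequently, the proof strategy I can offer establishes the corollary as literally stated only in the regime $n > 2k$. In the complementary range $n < 2k$ the statement is not reachable from the machinery of this paper, and the natural repair is to read $\delta$ as $\min(k, n-k)$ and the field size as $\Theta(n^{2\delta})$, in agreement with the abstract's formulation $\delta \in \{k, n-k\}$ with field size $\Theta(n^{2\delta})$; under this reading the proof is the straightforward conjunction of Theorem~\ref{thm:m1} (for $n > 2k$) and Corollary~\ref{coro:m1-dual} applied with dimension parameter $n - k$ (for $n < 2k$).
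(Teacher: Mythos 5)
Your reading of the situation is accurate, and it essentially reproduces what the paper does. The paper's proof of this corollary is nothing more than the one-line remark ``Combining Theorem~\ref{thm:m1} and Corollary~\ref{coro:m1-dual}, we have the following result,'' so the $n>2k$ branch of your argument (invoke Theorem~\ref{thm:m1} directly and pick $q$ the smallest prime at least $\max\{3,n\}$, so $q\leq 2n$ by Bertrand) is exactly the intended derivation, and the resulting field $\ff_{q^{2k}}$ has size $\Theta(n^{2k})$.

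You are also right that the $n<2k$ case is not obtainable by this route. Theorem~\ref{thm:m1} applied to dimension $n-k$ followed by Theorem~\ref{thm:mdp-dual} gives an $(n,k,\delta=n-k)$ code over $\ff_{q^{2(n-k)}}$ — both the degree and the exponent of the field size change to $n-k$. This is consistent with Corollary~\ref{coro:m1-n-k} and with the concluding remarks, which state the combined result precisely as ``degree $\delta=\min\{k,n-k\}$ over a field of size $\Theta(n^{2\delta})$.'' So the corollary as literally stated should be read as covering the half $n>2k$; the abstract's $\delta\in\{k,n-k\}$ phrasing is the safe formulation.

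One small overstatement in your analysis: Theorem~\ref{thm:cd} itself can certify the MDP property at any index $L$; there is nothing structurally preventing a unit-memory generator from defining an MDP code when $L\geq 2$ — several of the required full-size minors of $G_L^c$ involve the zero blocks $G_2,G_3,\dots$ only in positions where they are excluded by the index constraints $t_{ks+1}\geq ns+1$. The actual obstruction is simply that the paper's proof of Theorem~\ref{thm:m1} verifies the minor condition only for $L=1$, which requires $n>2k$ when $\delta=k$; nothing in the paper establishes the stronger condition at indices $L\geq 2$ that the $n<2k$ regime would need.
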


Furthermore, the duality of MDP convolutional codes implies that Construction~\ref{con:1} also gives rise to a family of $(n,k)$ MDP convolutional codes with degree $\delta=n-k$.

\begin{corollary}\label{coro:m1-n-k}
	Let $n\neq 2k$. There exists a family of $(n,k,\delta=n-k)$ MDP convolutional codes that can be constructed explicitly over a finite field of size $\Theta(n^{2(n-k)})$.
\end{corollary}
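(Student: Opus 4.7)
The plan is to obtain this corollary by combining Corollary~\ref{coro:m1-k} with the duality statement in Theorem~\ref{thm:mdp-dual}. The key observation is that the degree $\delta$ of a convolutional code is preserved under duality, while the rank changes from $k$ to $n-k$. Thus to produce an $(n,k,\delta = n-k)$ MDP code, it suffices to produce an $(n,n-k,\delta = n-k)$ MDP code and then pass to its dual.

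More precisely, first I apply Corollary~\ref{coro:m1-k} with the role of $k$ played by $k':=n-k$. Corollary~\ref{coro:m1-k} requires $n \neq 2k'$, and since $2k' = 2(n-k)$ the condition $n \neq 2(n-k)$ is equivalent to $n \neq 2k$, which is exactly the hypothesis of the present corollary. This yields an explicit $(n, n-k, \delta = n-k)$ MDP convolutional code $\cC'$ over a finite field of size $\Theta(n^{2(n-k)})$.

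Next, I invoke Theorem~\ref{thm:mdp-dual}: the dual code $(\cC')^{\perp}$, which has parameters $(n,k,\delta=n-k)$, is MDP over the same finite field. Since the construction of $\cC'$ is explicit (it is obtained either directly from Construction~\ref{con:1} if $n > 2(n-k)$, i.e., $k > n/2$, or by dualizing Construction~\ref{con:1} via Corollary~\ref{coro:m1-dual} if $k < n/2$), the same is true of $(\cC')^\perp$, so we obtain the desired explicit family.

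There is essentially no obstacle here beyond bookkeeping, since both the degree preservation under duality and the field-size bound transfer directly from Corollary~\ref{coro:m1-k}. The only thing to double-check is that the constraint $n \neq 2k$ in the present corollary lines up with the constraint $n \neq 2k'$ in Corollary~\ref{coro:m1-k} after the substitution $k' = n-k$, which it does.
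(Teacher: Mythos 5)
Your proposal is correct and follows essentially the same route as the paper: the paper constructs the $(n,n-k,\delta=n-k)$ code directly from Construction~\ref{con:1} with $\tilde k=n-k$ via Theorem~\ref{thm:m1} and then dualizes via Corollary~\ref{coro:m1-dual}, whereas you invoke the already-packaged Corollary~\ref{coro:m1-k} with $k'=n-k$ and dualize via the general Theorem~\ref{thm:mdp-dual} --- the same argument, just at a slightly higher level of abstraction. The bookkeeping of the constraint $n\neq 2k'$ and the field-size bound $\Theta(n^{2(n-k)})$ is handled exactly as in the paper.
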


\begin{proof}
	Let $\tilde{k}=n-k$ and $\tilde{\cC}$ be the $(n,\tilde{k})$ convolutional code constructed from Construction~\ref{con:1}. By theorem~\ref{thm:m1}, if $n>2\tilde{k}$ then $\tilde{\cC}$ is an $(n,\tilde{k},\delta=\tilde{k})$ unit memory MDP convolutional code over $\ff_{q^{2\tilde{k}}}$. Furthermore, let $\tilde{\cC}^\perp$ be the dual code of $\tilde{\cC}$. Then it follows from Corollary~\ref{coro:m1-dual} that $\tilde{\cC}^\perp$ is an $(n,n-\tilde{k},\delta=\tilde{k})$ MDP convolutional code over $\ff_{q^{2\tilde{k}}}$.
\end{proof}

\section{Concluding remarks}
\label{sec:con}

A few observations can be made from Corollary~\ref{coro:m1-k} and \ref{coro:m1-n-k}.	As mentioned before, it is desirable for $(n,k)$ convolutional codes to have small degree. Corollary~\ref{coro:m1-k} and \ref{coro:m1-n-k} imply that, given any parameter $n$ and $k$ such that $n\neq 2k$, there exists a family of $(n,k)$ MDP convolutional codes with degree $\delta=\min\{k,n-k\}$ that can be constructed explicitly over a finite field of size $\Theta(n^{2\delta})$.
In particular, if $R=k/n$ is a constant such that $0<R<1,R\neq1/2$, then the codes can be constructed over a finite field of size $2 ^{\Theta(\delta\log\delta)}$. A comparison of the field size requirement for constructions of MDP convolution codes with constant rate and memory is given in Table~\ref{tab:comp}.

\begin{table}[!t]
	\renewcommand{\arraystretch}{1.3}
	\centering
	\begin{tabular}{|c|c|}
		\hline
		Field size  & Reference\\
		\hline\hline
		$2^{O(\delta^2)}$ & \cite{gluesing2006strongly}\\
		\hline
		$2^{2^{O(\delta)}}$ & \cite{almeida2013new}\\
		\hline
		$2^{O(\delta^2\log\delta)}$ & \cite{alfarano2020weighted}\\
		\hline
		$2^{O(\delta\log\delta)}$ & This paper\\
		\hline\hline
		$2^{O(\delta)}$ & Conjecture in \cite{hutchinson2008superregular}\\
		\hline
	\end{tabular}
	\caption{A comparison of the field size requirement for constructions of MDP convolutional codes with constant rate and memory.}
	\label{tab:comp}
\end{table}

It is worth noting that by Definition~\ref{def:sMDS-MDP}, when $n<2k$ we have $M=L=1$, and thus in that case the constructions given by Corollary~\ref{coro:m1-k} and \ref{coro:m1-n-k} are also strongly-MDS convolutional codes.

Finally, it is interesting to generalize the approach in this paper to construct MDP convolutional codes for any $L>1$ and extend the result to any value of $\delta$.

\section*{Acknowledgements}
The author is grateful to Alexander Barg for discussion and comments on the first version of this paper.
	\bibliographystyle{IEEEtran}
	\bibliography{MDP}
\end{document}